\newtheorem{theorem}{Theorem}
\newtheorem{lemma}{Lemma}
\newtheorem{proposition}{Proposition}
\theoremstyle{definition}
\newtheorem{definition}{Definition}
\algnewcommand{\LeftComment}[1]{\Statex \textcolor{blue}{\(\triangleright\) #1}}
\newcommand{\iset}{\mathcal{I}}
\newcommand{\jset}{\mathcal{J}}
\newcommand{\smi}{\smallsetminus} 
\newcommand{\non}[2]{\overline{N}^\Delta(#1,#2)}
\newcommand{\sqin}{%
  \mathrel{\vphantom{\sqsubset}\text{%
    \mathsurround=0pt
    \ooalign{$\sqsubset$\cr$-$\cr}%
  }}%
}
\newcommand{\update}{\textsc{Update}}
\newcommand{\updatepool}{\textsc{UpdatePool}}
\newcommand{\bkd}{HMNS}
\newcommand{\viard}{VML}
\newcommand{\dkp}{$\Delta$-$k$-plex}
\newcommand{\dkps}{$\Delta$-$k$-plexes}
\newcommand{\BKB}{$\Delta$-$k$-\textsc{Bron\-Kerbosch}}
\newcommand{\dnn}{$\Delta$-non-neighbor}
\newcommand{\dnns}{$\Delta$-non-neighbors}
\newcommand{\dnnh}{$\Delta$-non-neighborhood}
\newcommand{\dfr}{$\Delta$-frame}
\newcommand{\dsd}{$\Delta$-slice degeneracy}
\newcommand{\appref}[1]{{\hyperref[proof:#1]{\appsymb}}}
\newcommand{\toappendix}[1]{\gappto{\appendixProofText}{#1}}
\begin{document}
\title{Listing All Maximal $k$-Plexes in Temporal Graphs
	\footnote{An extended abstract of this work appeared in the proceedings of the 2018 IEEE/ACM International Conference on Advances in Social Networks Analysis and Mining (ASONAM '18)~\cite{BHMMNS16}. This version contains all proof details, extended experimental findings, and the analysis of a new version of our algorithm that fixed a small bug in the code (which has no large impact on the results).}}

\author{Matthias Bentert}
\author{Anne-Sophie Himmel\thanks{Supported by the DFG, projects DAMM (NI 369/13) and FPTinP (NI 369/16).}}
\author{Hendrik Molter\thanks{Supported by the DFG, project MATE (NI 369/17).}}
\author{Marco~Morik}
\author{Rolf Niedermeier}
\author{Ren\'e Saitenmacher}
\affil{\small Institut f\"ur Softwaretechnik und Theoretische Informatik, TU~Berlin, Berlin, Germany

\texttt{\{anne-sophie.himmel, matthias.bentert, h.molter, rolf.niedermeier\}@tu-berlin.de}

\texttt{\{marco.t.morik, r.saitenmacher\}@campus.tu-berlin.de}}
\date{}
\maketitle

\begin{abstract}
Many real-world networks evolve over time, that is, new contacts appear and old contacts may disappear. They can be modeled as temporal graphs where interactions between vertices (which represent people in the case of social networks) are represented by time-stamped edges. One of the most fundamental problems in (social) network analysis is community detection, and one of the most basic primitives to model a community is a clique.
Addressing the problem of finding communities in temporal networks, Viard et al.~[TCS~2016] introduced $\Delta$-cliques as a natural temporal version of cliques. Himmel et al.~[SNAM~2017] showed how to adapt the well-known Bron-Kerbosch algorithm to enumerate $\Delta$-cliques. We continue this work and improve and extend the algorithm of Himmel et al. to enumerate temporal $k$-plexes (notably, cliques are the special case $k=1$). 

We define a \dkp~as a set of vertices and a time interval, where during this time interval each vertex has in each consecutive~$\Delta + 1$ time steps at least one edge to all but at most $k-1$ vertices in the chosen set of vertices.
We develop a recursive algorithm for enumerating all maximal \dkps~and perform experiments on real-world social networks that demonstrate the practical feasibility of our approach. 
In particular, for the special case of $\Delta$-1-plexes (that is,~$\Delta$-cliques), we observe that our algorithm is on average significantly faster than the previous algorithms by Himmel et al.~[SNAM~2017] and Viard et al.~[IPL~2018] for enumerating $\Delta$-cliques.
\end{abstract}

\newpage

\section{Introduction}
Community detection in networks is a highly active research area. In the probably most basic version, a community is modeled as a clique, that is, every vertex is connected to every other vertex in the clique. The concept is not only used for detecting communities in social networks, but it has also applications in ad hoc wireless networks~\cite{chen2004clustering} or biochemistry and genomics~\cite{butenko2006clique}. Cliques as a mathematical model, however, are often too restrictive for real-world applications, where some edges in communities might not exist because of errors in measurements or application-specific reasons. To circumvent this fact, the clique concept has seen several relaxations. Our work focuses on a popular degree-based relaxation of cliques known as $k$-plexes~\cite{seidman1978graph,guo2010more,conte2017fast,xiao2017fast,moser2012exact}. In a $k$-plex, every vertex must be adjacent to all but at most $k-1$ vertices in the~$k$-plex (excluding itself). A $1$-plex is a clique and in a $2$-plex every vertex can have a missing edge to one other vertex in the $2$-plex.
One can use $k$-plexes also as a tool for link-prediction, as the missing edges are probably good candidates for missing links in social networks:
It has been observed that friends of friends tend to become friends themselves \cite{ma2016playing}.

Previous work on $k$-plexes uses \emph{static} graph models~\cite{seidman1978graph,balasundaram2011clique,guo2010more,wu2007parallel,mcclosky2012ostergard,moser2012exact,pattillo2013clique,berlowitz2015efficient,conte2017fast,xiao2017fast,conte2018d2k}. Nowadays, however, an increasing amount of real-world data sets are time-labeled.
For example, in communication networks, such as email networks, the data is frequently time-stamped. A static network cannot distinguish at which time an email was sent and whether there are several emails sent between two persons. Modeling with static graphs is therefore often too restrictive. 
For community detection in temporal graphs, the concept of $\Delta$-cliques~\cite{viard2016deltacliquesjournal} has been introduced and studied~\cite{himmel2017adapting,viard2016deltacliquesjournal,ViardML18}. In a $\Delta$-clique, during its lifetime, each vertex has contact to each other vertex of the $\Delta$-clique at least once every $\Delta+1$ consecutive time steps. We extend this concept by allowing to have up to~$k-1$ missing edges per vertex during each interval of $\Delta$ consecutive time steps. For a formal definition we refer to \cref{sec:prelim}.


\subsection{Related Work}
There has been extensive research on both clique enumeration in temporal graphs~\cite{himmel2017adapting,viard2016deltacliquesjournal,ViardML18} and maximum $k$-plex detection~\cite{mcclosky2012ostergard,moser2012exact,xiao2017fast,balasundaram2011clique} and $k$-plex enumeration~\cite{conte2018d2k,wu2007parallel,berlowitz2015efficient,conte2017fast} in static graphs. 
To the best of our knowledge, we are the first to investigate the enumeration of $k$-plexes in temporal graphs. 
We follow up on the work of Himmel et al.~\cite{himmel2017adapting}, where the famous \textsc{BronKerbosch} algorithm \cite{bron1973algorithm} to enumerate cliques in static graphs was lifted to the temporal setting. The problem of finding $\Delta$-cliques in temporal graphs was introduced and motivated by the study of Viard et al.~\cite{viard2016deltacliquesjournal} who enumerated contact patterns among high-school students. 

The concept of $k$-plexes is due to Seidman and Foster~\cite{seidman1978graph}. To find maximum $k$-plexes, there are several combinatorial branch-and-cut approaches~\cite{mcclosky2012ostergard,moser2012exact,xiao2017fast} as well as ILP-based algorithms~\cite{balasundaram2011clique}. 
The \textsc{BronKerbosch} algorithm \cite{bron1973algorithm} has been adapted to enumerate $k$-plexes in static graphs~\cite{wu2007parallel,conte2018d2k}, but there are also enumeration algorithms based on other approaches~\cite{berlowitz2015efficient,conte2017fast}.
To the best of our knowledge, the currently fastest algorithm for finding a maximum-cardinality~$k$-plex in a static graph is due to Xiao et al.~\cite{xiao2017fast} and the fastest algorithm for listing all maximal~$k$-plexes in a static graph is due to Berlowitz~et~al.~\cite{berlowitz2015efficient}.
There are several other clique relaxations.
Typically, the corresponding decision problems are NP-complete. For more details on different clique relaxations we refer to Patillo et al.~\cite{pattillo2013clique}.

Aside from community detection, many other problem areas also have been studied in the context of temporal graphs, including connectivity problems~\cite{kempe2000connectivity,Mertzios13,akridaTOCS17,FluschnikMNZ18,ZschocheFMN18}, graph exploration~\cite{erlebach15,ErlebachS18}, clustering~\cite{CMSS2018}, and covering problems~\cite{Akrida-ICALP18,MMZ2019,Temp-Matching-19-arxiv}. For an extended overview on research related to temporal graphs, we refer to the surveys of Holme and Saram\"aki~\cite{holme2012temporal}, Casteigts and Flocchini~\cite{flocchini1,flocchini2}, Michail~\cite{michail2016introduction}, and Latapy et al.~\cite{latapy2017stream}.

\subsection{Our Contributions}
Regarding theory, we formally define~\dkps{}, adapt and extend an existing recursive algorithm specialized to $\Delta$-cliques \cite{himmel2017adapting} to enumerate them, prove its correctness (\Cref{theorem:correct}), and present a worst-case running time analysis of our new algorithm (\Cref{thm:runtime}).
In particular, our running-time analysis shows that 
our algorithm has polynomial running time for constant $k$ if the input graph has constant \dsd{}, a measure for sparseness of temporal graphs~\cite{himmel2017adapting}.

Regarding practice, we present and evaluate two heuristic speed-up techniques.
\begin{itemize}
\item We propose a pivoting strategy to reduce the number of recursive calls of the algorithm.
\item We present a strategy that does not enumerate all maximal \dkps{} but only those which might be ``of interest''. This excludes for example~$\binom{n}{k}$ trivial solutions induced by any set of~$k$~vertices over the whole lifetime of the graph (where $n$ is the total number of vertices). 
\end{itemize}
The paper is organized as follows. 
In \cref{sec:prelim} we introduce some definitions and notations used throughout the paper. 
We further explain the original \textsc{BronKerbosch} algorithm which serves as a role model for the  $\Delta$-clique algorithm \cite{himmel2017adapting} and, consequently, for our algorithm. 
In \cref{sec:ourAlg}, we present our adaptation of the \textsc{BronKerbosch} algorithm for enumerating all maximal \dkps{} in temporal graphs. 
After a detailed description of the algorithm, we prove the correctness and analyze the running time of the algorithm. We further utilize the parameter $\Delta$-slice degeneracy~\cite{himmel2017adapting} of a temporal graph to upper-bound the running time. We continue with heuristic tricks to improve the running time of the algorithm.
In \cref{sec:experiment}, we conduct an experimental analysis of the algorithm on real-world data sets. 
We study the effectiveness of our heuristics and compare the running time of our algorithm to the running time of the algorithm by \citet{ViardML18} for $\Delta$-cliques ($\Delta$-$1$-plexes). We then study the number and characteristics of \dkps{} in our data sets. 
\Cref{sec:conclusion} presents our conclusion and \cref{sec:appendix} is an appendix containing additional diagrams and tables of our experimental findings.

\section{Preliminaries}
\label{sec:prelim}
We provide notation for (time) intervals and temporal graphs.
For static graphs, we refer to the book of Diestel \cite{diestel2000graph}. 
We also give a short description of the classic \textsc{BronKerbosch} algorithm~\cite{bron1973algorithm}.

\subsection{Intervals and Sets of Intervals}
\label{sub:intervals}
We refer to an interval as a contiguous ordered set of discrete time steps.
    Formally, an \emph{interval} is an ordered set $$I=[a,b]:=\{n \mid n \in \mathbb{N}
    \land a \leq n \leq b \},$$where~$a,b \in \mathbb{N}$. Further, let $[a]:=[1,a]$.

For a set $A\subseteq\mathbb{N}$, we say that an interval $I \subseteq A$ is \emph{maximal} with respect to $A$ if there is no larger interval $I' \subseteq A$ such that~$I\subset I'$.
If $A$ is not contiguous, then it contains multiple maximal intervals.

A \emph{set $\iset$ of intervals} is an ordered set of $n$ pairwise disjoint intervals, that is,~$\iset = \{I_i \mid i \in [n]\}$ with intervals $I_i = [a_i, b_i]$ where for all $i \in [n-1]$ it holds that $b_i < a_{i+1} - 1$.
We will use $I,J$ to refer to intervals, and $\iset, \jset$ to refer to sets of intervals.
Next, we define the operations \emph{union}, \emph{intersection}, and \emph{difference} for
sets of intervals as a straight-forward extension of the standard set operations on intervals.
We also introduce the notion of an interval respectively a set of intervals being \emph{covered} by a set of intervals.

\begin{definition}\label{def:def}
Given two sets $\iset$ and $\jset$ of intervals,
\begin{itemize}
\item an interval $I$ is \emph{covered} by $\iset$ (i.e.\ $I \sqin \iset$) if there exists an~$I' \in \iset$ such that
$I \subseteq I'$;
\item an interval-set $\iset$ is \emph{covered} by $\jset$ (i.e.\ $\iset \sqsubseteq \jset$) if for all $I \in \iset$ it holds that $I \sqin \jset$;
\item the \emph{union} of $\iset$ and $\jset$ is the set of intervals defined by\\
\[
\iset\sqcup\jset:= \big\{I' \mid I' \text{ is maximal w.r.t.\
}\big((\bigcup_{I\in\iset}I)\cup(\bigcup_{J\in\jset}J)\big)\big\};
\]
\item the \emph{intersection} of $\iset$ and $\jset$ is the set of intervals\\
\[
\iset\sqcap\jset:= \big\{I' \mid I' \text{ is maximal \ w.r.t.\ }
\big((\bigcup_{I\in\iset}I)\cap(\bigcup_{J\in\jset}J)\big)\big\};
\]
\item and $\iset$ \emph{minus} $\jset$ is the set of intervals\\
\[
\iset\smi\jset:= \big\{I' \mid I' \text{ is maximal\ w.r.t.\ }
\big((\bigcup_{I\in\iset}I)\setminus(\bigcup_{J\in\jset}J)\big)\big\}.
\]
\end{itemize}
\end{definition}
To make \Cref{def:def} more accessible, we give some examples:
\[[1,2]\sqin \{[1,4]\}; \ \{[1,2]\}\sqsubseteq \{[1,4]\};\]
\[\{[1,2],[5,8]\}\sqcup \{[1,4],[5,6]\}=\{[1,4],[5,8]\};\]
\[\{[1,2],[5,8]\}\sqcap \{[1,4],[5,6]\}=\{[1,2],[5,6]\};\]
\[\{[1,4],[5,8]\}\smi \{[1,2],[5,6]\}=\{[3,4],[7,8]\}.\]

We refer to a tuple $(v,I_v)$ with $v$ being a vertex and $I_v$ being an interval as a 
\emph{vertex-interval pair} and to a tuple $(v, \mathcal I)$ with a vertex $v
\in V$ and a set $\mathcal I$ of intervals as a \emph{vertex-interval-set pair}. 
For a set~$A$ of vertex-interval-set pairs, we define $V(A)$ to be the set of
all vertices which are contained in a vertex-interval-set pair in~$A$.
We further use the following notation. Let $X,Y$ be sets of vertex-interval-set pairs, let~$(v,\mathcal I_v)$ be a vertex-interval-set pair, let $I \subseteq
T$ be an interval, and let~$V' \subseteq V$ be a set of vertices:
\begin{itemize}
\item $X[\mathcal I]:= \{ (u, \mathcal I_u \sqcap \mathcal I) \mid (u,\mathcal
I_u) \in X\}$;
\item $X[V']:= \{ (u, \mathcal I) \mid (u,\mathcal I) \in X \wedge u \in V'\}$;
\item 
$ (v,\mathcal I_v) \sqcup Y :=
\begin{cases}
Y \cup \{(v,\mathcal I_v)\} &\text{ if } \not \exists (v,\mathcal I^Y_v )\in Y \\
(Y \setminus \{(v,\mathcal I^Y_v )\}) \cup \{(v, \mathcal I_v \sqcup \mathcal I^Y_v )\} &\text{ otherwise; }
\end{cases}$
\item $X \sqcap Y := \{ (u, \mathcal I^X_u \sqcap \mathcal I^Y_u ) \mid
(u,\mathcal I^X_u)  \in X \wedge (u,\mathcal I^Y_u ) \in Y\}$; and
\item $(v,\mathcal I_v) \sqcap Y := \{(v,\mathcal I_v)\} \sqcap Y$.
\end{itemize}

\subsection{Temporal Graphs}
A temporal graph~\cite{michail2016introduction}, also referred to as temporal network \cite{holme2012temporal} or link stream \cite{latapy2017stream}, is a graph whose edge set changes over time.
A temporal graph can be seen as a sequence of static graphs over a fixed set of vertices.
\begin{definition}
A \emph{temporal graph} $G=(V,E, \omega)$ is a triple consisting
of a set~$V$ of vertices, a lifetime~$\omega$, and a
set~$E\subseteq\binom{V}{2}\times [\omega]$  of time-stamped edges. 
\end{definition}
Throughout this paper, we will assume that each vertex-interval pair~$(v,I_v)$ satisfies~$v \in V$ and~$I_v \subseteq [\omega]$, and each vertex-interval-set pair~$(v,\iset)$ satisfies~$v \in V$ and~$I_v \subseteq [\omega]$ for each~$I_v \in \iset$.

\subsubsection{$\Delta$-Frame and $\Delta$-Non-Neighborhood}
A \emph{$\Delta$-frame} is an interval of (consecutive) time steps, that is, each~$\Delta$-frame~$\Delta_i$ with~$i \in [\omega-\Delta]$ corresponds to the interval~${[i,i+\Delta]}$ of time steps.
In order to properly define~\dkp{es}, we need to adjust the notion of neighborhood from static to temporal graphs.
Instead of just considering the incident edges of a vertex at one time step, we consider all incident edges within a $\Delta$-frame.
We say that two vertices $u$ and $v$ are \emph{neighbors} in~$\Delta_i$ if there is an edge~$(\{u,v\},t)\in E$ with~$t \in \Delta_i$. 
Accordingly, we say that $u$ and $v$ are \emph{non-neighbors} in~$\Delta_i$ 
if there exists no edge $(\{u,v\},t)\in E$ with~$t \in \Delta_i$.

The \emph{$\Delta$-non-neighborhood}~$\non{v}{i}$ of a vertex~$v \in V$ and a $\Delta$-frame $\Delta_i$
 is the set of all non-neighbors of $v$ in $\Delta_i$. 
 More formally, we arrive at the following definition.
 \begin{definition}
Let $G=(V,E,\omega)$ be a temporal graph, let~$v \in V$ be a
vertex, and let~$\Delta_i$, $i \in [\omega-\Delta]$, be a $\Delta$-frame in~$G$.
The \emph{$\Delta$-non-neighborhood} of $v$ in $\Delta_i$ is
  \begin{align*}
  \non{v}{i} := \{ w \in V \mid \forall t \in \Delta_i : (\{v,w\},t)
  \not \in E\}.
 \end{align*}
\end{definition}
Accordingly, we define the \emph{$\Delta$-non-neighborhood}~$\non{v}{\mathcal I_v}$ of a vertex-interval-set pair~$(v,\mathcal I_v)$
 as the set of (time-maximal) vertex-interval-set pairs~$(u,\mathcal I_u)$ such that there is no edge between~$u$ and~$v$ in
any~$\Delta$-frame~$\Delta_i, i \in I$, for all $I \in \mathcal I_u \sqcap \mathcal I_v$. See
\cref{subfig:Nona,subfig:Nonb,subfig:Nonc} for an example.
Formally, it is defined as follows.
\begin{definition}
Let $G=(V,E,\omega)$ be a temporal graph, let~$(v, \mathcal I_v)$ be a
vertex-interval-set pair of~$G$.
The \emph{$\Delta$-non-neighborhood} is defined as
$$ \non{v}{\mathcal I_v} := \{(u,\mathcal I_u) \mid u \in V \land \mathcal I_u = \{I \mid I \text{ is maximal w.r.t.\ } \bigcup_{i\in J, J\in \mathcal I_v} \{i \mid u \in \non{v}{i}\}\}\}.$$
\end{definition}
 

\subsubsection{$\Delta$-$k$-Plex}
We define a \emph{\dkp} as a straightforward relaxation of a $\Delta$-clique defined by Viard et al.~\cite{viard2016deltacliquesjournal}.
Given a temporal graph $G=(V,E,\omega)$, a $\Delta$-clique consists of a set~$C$ of vertices and a lifetime $I=[a,b]$.
Each two vertices~$u,v \in C$ are required to be neighbors within any~$\Delta$-frame~$\Delta_i, i\in I$.

Analogously to~$k$-plexes in static graphs, \dkps{} are defined so that each vertex in the vertex set~$C$ of the \dkp{} must have at least $|C|-k$ neighbors in each~$\Delta$-frame~$\Delta_i, i \in I$.
See \cref{subfig:Kplex} for an illustration of a \dkp{}.
\begin{definition}
  Given a temporal graph $G=(V,E,\omega)$, ${\Delta \in \mathbb{N}}$, a subset~${C\subseteq V}$ of vertices, and an interval $I=[a,b] \subseteq [\omega-\Delta]$, then
    $R = (C,I)$ is a \emph{$\Delta$-$k$-plex} if for all $v \in C$
    and all $\Delta_i,~i \in I$, it holds that~${|\non{v}{i} \cap
    C| \leq k}$.
\end{definition}

\begin{figure}[t]
    \centering
    \begin{subfigure}[b]{.45\linewidth}
    \centering
    \begin{tikzpicture}[thick,scale=.9]
        \tikzstyle{phase} = [fill,shape=circle,minimum size=5pt,inner sep=0pt];
        
        \node (time1) at (0,0.5) {1};
        \node (time2) at (1,0.5) {2};
        \node (time3) at (2,0.5) {3};
        \node (time4) at (3,0.5) {4};
        \node (time5) at (4,0.5) {5};
        \node (time6) at (5,0.5) {6};
        \node at (-1,0) (q1) {$a$};
        \node at (-1,-1) (q2) {$b$};
        \node at (-1,-2) (q3) {$c$};
        \node (node0a) at (0,0) {};
        \node[phase] (node0b) at (0,-1) {};
        \node[phase] (node0c) at (0,-2) {};
        \draw[-] (node0b) to[out=-60,in=60] (node0c);
        \node[phase] (node1a) at (1,0) {};
        \node[phase] (node1b) at (1,-1) {};
        \node (node1c) at (1,-2) {};
        \draw[-] (node1a) to[out=-60,in=60] (node1b);
        \node (node2a) at (2,0) {};
        \node(node2b) at (2,-1) {};
        \node (node2c) at (2,-2) {};

        \node[phase] (node3a) at (3,0) {};
        \node (node3b) at (3,-1) {};
        \node[phase] (node3c) at (3,-2) {};
        \draw[-] (node3a) to[out=-60,in=60] (node3c);
        \node (node4a) at (4,0) {};
        \node[phase] (node4b) at (4,-1) {};
        \node[phase] (node4c) at (4,-2) {};
        \draw[-] (node4b) to[out=-60,in=60] (node4c); 
        \node[phase] (node5a) at (5,0) {};
        \node[phase] (node5b) at (5,-1) {};
        \node[phase] (node5c) at (5,-2) {};
        \draw[-] (node5a) to[out=-60,in=60] (node5b);
        \draw[-] (node5a) to[out=-60,in=60] (node5c);
        \draw[-] (node5b) to[out=-60,in=60] (node5c);
        \node (end1) at (6,0) {} edge [-] (q1);
        \node (end2) at (6,-1) {} edge [-] (q2);
        \node (end3) at (6,-2) {} edge [-] (q3);

        \node (dummy) at (3.2, -2) {};
        \node (dummy2) at (5.2, -1) {};
        \node (dummy3) at (5.2, -2) {};
        \begin{pgfonlayer} {background}
        
        \node (background) [fill=yellow , fill opacity=1, fit = (node2b) (node3b)] {};
        \node (background) [fill=yellow , fill opacity=1, fit = (node0c) (node1c)] {};
        \node (background) [fill=yellow , fill opacity=1, fit = (node0a) (node4a)] {};
        \end{pgfonlayer}
    \end{tikzpicture}
    \caption{$\non{a}{\{[1,6]\}}$}
    \label{subfig:Nona}
    \end{subfigure}
    \begin{subfigure}[b]{.45\linewidth}
    \centering
    \begin{tikzpicture}[thick,scale=.9]
        \tikzstyle{phase} = [fill,shape=circle,minimum size=5pt,inner sep=0pt];
        
        \node (time1) at (0,0.5) {1};
        \node (time2) at (1,0.5) {2};
        \node (time3) at (2,0.5) {3};
        \node (time4) at (3,0.5) {4};
        \node (time5) at (4,0.5) {5};
        \node (time6) at (5,0.5) {6};

        \node at (-1,0) (q1) {$a$};
        \node at (-1,-1) (q2) {$b$};
        \node at (-1,-2) (q3) {$c$};
        \node (node0a) at (0,0) {};
        \node[phase] (node0b) at (0,-1) {};
        \node[phase] (node0c) at (0,-2) {};
        \draw[-] (node0b) to[out=-60,in=60] (node0c);
        \node[phase] (node1a) at (1,0) {};
        \node[phase] (node1b) at (1,-1) {};
        \node (node1c) at (1,-2) {};
        \draw[-] (node1a) to[out=-60,in=60] (node1b);
        \node (node2a) at (2,0) {};
        \node(node2b) at (2,-1) {};
        \node (node2c) at (2,-2) {};

        \node[phase] (node3a) at (3,0) {};
        \node (node3b) at (3,-1) {};
        \node[phase] (node3c) at (3,-2) {};
        \draw[-] (node3a) to[out=-60,in=60] (node3c);
        \node (node4a) at (4,0) {};
        \node[phase] (node4b) at (4,-1) {};
        \node[phase] (node4c) at (4,-2) {};
        \draw[-] (node4b) to[out=-60,in=60] (node4c); 
        \node[phase] (node5a) at (5,0) {};
        \node[phase] (node5b) at (5,-1) {};
        \node[phase] (node5c) at (5,-2) {};
        \draw[-] (node5a) to[out=-60,in=60] (node5b);
        \draw[-] (node5a) to[out=-60,in=60] (node5c);
        \draw[-] (node5b) to[out=-60,in=60] (node5c);
        \node (end1) at (6,0) {} edge [-] (q1);
        \node (end2) at (6,-1) {} edge [-] (q2);
        \node (end3) at (6,-2) {} edge [-] (q3);

        \node (dummy) at (3.2, -2) {};
        \node (dummy2) at (5.2, -1) {};
        \node (dummy3) at (5.2, -2) {};
        \begin{pgfonlayer} {background}
        
        \node (background) [fill=yellow , fill opacity=1, fit = (node2a) (node3a)] {};
        \node (background) [fill=yellow , fill opacity=1, fit = (node1c) (node2c)] {};
        \node (background) [fill=yellow , fill opacity=1, fit = (node0b) (node4b)] {};
        \end{pgfonlayer}
    \end{tikzpicture}
    \caption{$\non{b}{\{[1,6]\}}$}
    \label{subfig:Nonb}
    \end{subfigure}
    \begin{subfigure}[b]{.45\linewidth}
    \begin{tikzpicture}[thick,scale=.9]
        \tikzstyle{phase} = [fill,shape=circle,minimum size=5pt,inner sep=0pt];
        
        \node (time1) at (0,0.5) {1};
        \node (time2) at (1,0.5) {2};
        \node (time3) at (2,0.5) {3};
        \node (time4) at (3,0.5) {4};
        \node (time5) at (4,0.5) {5};
        \node (time6) at (5,0.5) {6};
        \node at (-1,0) (q1) {$a$};
        \node at (-1,-1) (q2) {$b$};
        \node at (-1,-2) (q3) {$c$};
        \node (node0a) at (0,0) {};
        \node[phase] (node0b) at (0,-1) {};
        \node[phase] (node0c) at (0,-2) {};
        \draw[-] (node0b) to[out=-60,in=60] (node0c);
        \node[phase] (node1a) at (1,0) {};
        \node[phase] (node1b) at (1,-1) {};
        \node (node1c) at (1,-2) {};
        \draw[-] (node1a) to[out=-60,in=60] (node1b);
        \node (node2a) at (2,0) {};
        \node(node2b) at (2,-1) {};
        \node (node2c) at (2,-2) {};

        \node[phase] (node3a) at (3,0) {};
        \node (node3b) at (3,-1) {};
        \node[phase] (node3c) at (3,-2) {};
        \draw[-] (node3a) to[out=-60,in=60] (node3c);
        \node (node4a) at (4,0) {};
        \node[phase] (node4b) at (4,-1) {};
        \node[phase] (node4c) at (4,-2) {};
        \draw[-] (node4b) to[out=-60,in=60] (node4c); 
        \node[phase] (node5a) at (5,0) {};
        \node[phase] (node5b) at (5,-1) {};
        \node[phase] (node5c) at (5,-2) {};
        \draw[-] (node5a) to[out=-60,in=60] (node5b);
        \draw[-] (node5a) to[out=-60,in=60] (node5c);
        \draw[-] (node5b) to[out=-60,in=60] (node5c);
        \node (end1) at (6,0) {} edge [-] (q1);
        \node (end2) at (6,-1) {} edge [-] (q2);
        \node (end3) at (6,-2) {} edge [-] (q3);

        \node (dummy) at (3.2, -2) {};
        \node (dummy2) at (5.2, -1) {};
        \node (dummy3) at (5.2, -2) {};
        \begin{pgfonlayer} {background}
        
        \node (background) [fill=yellow , fill opacity=1, fit = (node0a) (node1a)] {};
        \node (background) [fill=yellow , fill opacity=1, fit = (node1b) (node2b)] {};
        \node (background) [fill=yellow , fill opacity=1, fit = (node0c) (node4c)] {};
        \end{pgfonlayer}
    \end{tikzpicture}
    \caption{$\non{c}{\{[1,6]\}}$}
    \label{subfig:Nonc}
    \end{subfigure}
    \begin{subfigure}[b]{.45\linewidth}
    \begin{tikzpicture}[thick,scale=.9]
        \tikzstyle{phase} = [fill,shape=circle,minimum size=5pt,inner sep=0pt];
        
        \node (time1) at (0,0.5) {1};
        \node (time2) at (1,0.5) {2};
        \node (time3) at (2,0.5) {3};
        \node (time4) at (3,0.5) {4};
        \node (time5) at (4,0.5) {5};
        \node (time6) at (5,0.5) {6};
        \node at (-1,0) (q1) {$a$};
        \node at (-1,-1) (q2) {$b$};
        \node at (-1,-2) (q3) {$c$};
        \node (node0a) at (0,0) {};
        \node[phase] (node0b) at (0,-1) {};
        \node[phase] (node0c) at (0,-2) {};
        \draw[-] (node0b) to[out=-60,in=60] (node0c);
        \node[phase] (node1a) at (1,0) {};
        \node[phase] (node1b) at (1,-1) {};
        \node (node1c) at (1,-2) {};
        \draw[-] (node1a) to[out=-60,in=60] (node1b);
        \node (node2a) at (2,0) {};
        \node(node2b) at (2,-1) {};
        \node (node2c) at (2,-2) {};

        \node[phase] (node3a) at (3,0) {};
        \node (node3b) at (3,-1) {};
        \node[phase] (node3c) at (3,-2) {};
        \draw[-] (node3a) to[out=-60,in=60] (node3c);
        \node (node4a) at (4,0) {};
        \node[phase] (node4b) at (4,-1) {};
        \node[phase] (node4c) at (4,-2) {};
        \draw[-] (node4b) to[out=-60,in=60] (node4c); 
        \node[phase] (node5a) at (5,0) {};
        \node[phase] (node5b) at (5,-1) {};
        \node[phase] (node5c) at (5,-2) {};
        \draw[-] (node5a) to[out=-60,in=60] (node5b);
        \draw[-] (node5a) to[out=-60,in=60] (node5c);
        \draw[-] (node5b) to[out=-60,in=60] (node5c);
        \node (end1) at (6,0) {} edge [-] (q1);
        \node (end2) at (6,-1) {} edge [-] (q2);
        \node (end3) at (6,-2) {} edge [-] (q3);

        \node (dummy) at (3.2, -2) {};
        \node (dummy2) at (5.2, -1) {};
        \node (dummy3) at (5.2, -2) {};
        \begin{pgfonlayer} {background}
        
        \node (background) [fill=yellow , fill opacity=1, fit = (node3a) (node4c)] {};
        \end{pgfonlayer}
    \end{tikzpicture}
    \caption{Maximal $\Delta$-2-plex}
    \label{subfig:Kplex}
    \end{subfigure}
    \caption{Given a temporal graph with $\Delta=1$, \cref{subfig:Nona,subfig:Nonb,subfig:Nonc} show the \dnnh~of vertices $a$, $b$, and~$c$, respectively, shaded in yellow. In \cref{subfig:Nona}, we can see for example that $b$ within time interval $[3,4]$ is in the \dnnh~of $a$ because in the $\Delta$-frames $\Delta_3$ ($[3,4]$) and $\Delta_4$ ($[4,5]$) there are no time-stamped edges between $a$ and $b$. 
    \cref{subfig:Kplex} shows a maximal $\Delta$-2-plex shaded in yellow. 
    The tuple $(\{a,b,c\},[4,5])$ is a $\Delta$-2-plex because in~$\Delta_4$ and $\Delta_5$ each vertex has at most two non-neighbors (including itself). }
    \label{fig:Non-Neighborhood}
\end{figure}

We focus on finding maximal \dkps.
As already discussed for $\Delta$-cliques~\cite{viard2016deltacliquesjournal}, there is both \emph{vertex-maximality} and \emph{time-maximality}.
Given a temporal graph $G=(V,E,\omega)$, a \dkp~$R=(C,I)$ is \emph{vertex-maximal} if and only if there is no vertex~$v \in V\setminus C$ such that~$(C \cup \{v\}, I)$ is a \dkp.
Intuitively, a \dkp~is vertex-maximal if no other vertex can be added to it without decreasing its lifetime.
We say that a \dkp~$R=(C,I)$ is \emph{time-maximal} if and only if there is no $I\subset I' \subseteq [\omega]$ such that~$R'=(C,I')$ is a \dkp.
Intuitively, a \dkp~is time-maximal if we cannot increase its lifetime without removing a vertex from it.
We call a~\dkp~\emph{maximal} if it is both vertex-maximal and time-maximal.

\subsubsection{Degeneracy of Temporal Graphs}
 The \emph{degeneracy} of a static graph~$G$ is the smallest integer~$d$ such that
 every non-empty subgraph of~$G$ contains a vertex of degree at most~$d$. We use
 an analogue for the temporal setting as introduced by Himmel et al.~\cite{himmel2017adapting},
 also motivated by the fact that many real-world static graphs have small degeneracy~\cite{ELS13}. 

\begin{definition}\label{def:dslicedeg}
A temporal graph~$G=(V,E,\omega)$ has \emph{$\Delta$-slice degeneracy}~$d$ if for all~$t \in [\omega-\Delta]$ it holds that~$G_{\Delta_t}=(V, E_{\Delta_t})$, where~$E_{\Delta_t} = \{\{v, w\} \ | \ (\{v, w\}, t') \in E \land t' \in \Delta_t\}$, has degeneracy at most~$d$.
\end{definition}

\subsection{\textsc{BronKerbosch}}
The \textsc{BronKerbosch} algorithm is a classic algorithm that enumerates all
maximal cliques in a static graph \cite{bron1973algorithm}. It is a simple yet clever
backtracking algorithm, which can be made to perform very well on real-world networks~\cite{ELS13}. We give a short
description here since our adaptation inherits several ideas of this algorithm.

The \textsc{BronKerbosch} algorithm maintains three distinct sets of vertices.
The first set~$R$ contains the current clique. The other two sets $P$ and $X$ contain the vertices that can be added to~$R$ such that~$R$ is still a clique. The set $P$ contains all candidates which have not been considered in previous iterations, while the set $X$ contains vertices that have been considered before. In each recursive call, the algorithm first checks whether the current clique~$R$ is maximal, that is, whether~$P\cup X = \emptyset$. If so, then it adds $R$ to the solution, otherwise it iterates through all vertices $v \in P$, adds $v$ to~$R$, and recursively calls itself with updated sets $P'$ and~$X'$ where all vertices that are not adjacent to~$v$ are removed. Afterwards, it removes~$v$ from $P$ and adds it to $X$. The initial call is with~$P=V$ and~$R=X=\emptyset$.

\section{$\Delta$-$k$-\textsc{Bron\-Kerbosch}}
\label{sec:ourAlg}

\begin{algorithm}[t!]
\begin{algorithmic}[1] 
\Function{$\Delta$-$k$-BronKerbosch}{$P ,R=(C,\mathcal I) , X,B$} 
\LeftComment{$R=(C,\mathcal I):$ for every $I \in \mathcal I$, $(C,I)$ is a time-maximal \dkp.}
\LeftComment{$P \cup X:$ set of all vertex-interval-set pairs $(v,\mathcal I_v)$ such that for all $I_v \in \mathcal I_v$ it holds that $I_v \sqin \mathcal I$ and $(C \cup \{v\}, I_v)$ is a time-maximal \dkp~and where
 \begin{compactitem}
   	\item vertex-interval-set pairs in $P$ have not yet been considered as additions to $R$, and
   	\item vertex-interval-set pairs in $X$ have been considered in earlier steps.
 \end{compactitem}}
 \LeftComment{$B \colon V \times [\omega-\Delta] \mapsto \mathbb{N}$ with $B(w,t) = |\non{w}{t} \cap C|$, that is, function~$B$ displays for every vertex~$w$ and every $\Delta$-frame $\Delta_t$ the number of non-neighbors of $w$ in $C$ within~$\Delta_t$. }
\For{$I \in \mathcal{I}$}
\If{$\forall (w, \mathcal I_w) \in P \cup X$ and $\forall I_w \in \mathcal I_w 
\colon I_w \not = I$}\label{line:addcondition}
	\State{add~$(C,I)$ to the solution}
\EndIf
\EndFor

\For{$(v,\mathcal I_v) \in P$}
	\State{$C' \gets C \cup \{v\}$}
	\State{$\mathcal I' \gets \mathcal I_v $}
	\LeftComment{Adaption of the function $B$ and the sets~$P$ and $X$ to the new set of \dkps~$(C',\mathcal I')$. Crit contains all pairs $(w, \mathcal I_w)$ where~$w$ has exactly $k$ $\Delta$-non-neighbors in $C'$ during $I \in \mathcal I_w$.}
	\State{$B',\text{Crit} \gets \Call{UpdatePool}{B,C',(v, \mathcal I_v)} $}
	\State{$P' \gets \Call{Update}{P,C',\text{Crit}, (v,\mathcal I_v)} $}
	\State{$X' \gets \Call{Update}{X,C',\text{Crit},(v,\mathcal I_v)} $}

	\State{\Call{$\Delta$-$k$-BronKerbosch}{$P',R'=(C',\mathcal I'), X',B'$}}
	\State{$P \gets P \setminus \{(v,\mathcal I_v)\}$}
	\State{$X \gets X \cup \{(v,\mathcal I_v)\}$}
\EndFor
\EndFunction
\end{algorithmic}
\caption{Enumerating all maximal $\Delta$-$k$-plexes}
\label{alg:bronkerdelta}
\end{algorithm}
\begin{algorithm}[t!]
\begin{algorithmic}[1] 
\Function{UpdatePool}{$B,C,(v, \mathcal I_v) $}
 \LeftComment{$B \colon V \times [\omega-\Delta]  \rightarrow \mathbb{N}$ with $B(v,t) = | \non{v}{t} \cap C |$.} 
 \LeftComment{$C,v, \mathcal I_v \colon$ 
$v \in C$, and for every $I \in \mathcal I_v$ it holds that $(C,I)$ is a time-maximal \dkp.
}
\LeftComment{Update function $B$ for $v \in C$; store critical vertex-interval-set pairs.}
\State{$\text{Crit} \gets \{(c,\emptyset) \mid c \in C \cup V(P)\}$ }
\State{$B' \gets B$}
\For{$ (w,\mathcal I_w) \in  P \cup X \cup \{(c,\mathcal I_v) \mid c \in C\}$} 
\LeftComment{Function $B$ only changes if a vertex is in the non-neighborhood of $v$ within $\mathcal I_v$.}
\State{$ (w,\mathcal I_{\text{crit}}) \gets (w,\mathcal I_w) \sqcap \non{v}{\mathcal I_v}$}
\For{$ t \in I_{\text{crit}}, I_{\text{crit}} \in \mathcal I_{\text{crit}}$}
\State{$ B'(w,t) \gets B'(w,t) + 1$}
\LeftComment{If a vertex $v$ at time step $t$ already has $k$ non-neighbors in $C$, then this vertex is critical.}
\If{$B'(w,t) = k$ }
\State{$\text{Crit} \gets \text{Crit} \sqcup (w,[t,t])$}
\EndIf
\EndFor
\EndFor
\State{\Return{$B',\text{Crit}$}}
\EndFunction
\Statex{~}
\Function{Update}{$P , C, \text{Crit} , (v,\mathcal I_v)$} 
\LeftComment{Update $P$ such that for all $(w, \mathcal I_w)$ and all $I_w \in \mathcal I_w, I_w \sqsubseteq \mathcal I_v$ it holds that $(C \cup \{w\}, I_w)$ is a time-maximal \dkp.}
\State{$P_{\text{reduced}} \gets P[\mathcal I_v]$}
\State{$P' \gets \emptyset$}
\For{$(w,\mathcal I_w) \in P_{\text{reduced}}[V(P)\setminus
\{v\}]$}\label{line:add}
\LeftComment{If for $w$ there exists a non-neighbor in some~$\Delta_i \in I_w \in \mathcal I_w$ in $C$ that is critical in~$\Delta_i$, then we cannot add $w$ to $C$ in~$\Delta_i$.}
	\For{$(u,\mathcal I_u) \in  \text{Crit}[C \cup \{w\}] \sqcap  \non{w}{\mathcal I_w}$}
		\State{$\mathcal I_w \gets \mathcal I_w \smi \mathcal I_u$}
	\EndFor
	\State{$P' \gets P' \cup \{(w,\mathcal I_w)\} $}
\EndFor 
\State{\Return{$P'$}}
\EndFunction
\end{algorithmic}
\caption{Updating all Vertex-Interval-Sets}
\label{alg:update}
\end{algorithm}

The original \textsc{BronKerbosch} algorithm enumerates all maximal cliques in a
static graph. In previous work, \textsc{BronKerbosch} was adapted to enumerate maximal $\Delta$-cliques in temporal graphs~\cite{himmel2017adapting}. In the following, we describe how to further
modify and improve the algorithm to enumerate maximal~$k$-plexes in temporal graphs. We call
the new algorithm \BKB, see \cref{alg:bronkerdelta} and \cref{alg:update} for pseudocode. 
To adapt the $\Delta$-clique algorithm~\cite{himmel2017adapting} to enumerate maximal \dkps{}, \BKB{} additionally maintains a \emph{pool} for the current \dkps{}, which is a data structure that keeps track of the missing neighbors of each vertex of the current \dkps.

More formally, the input of
\BKB{} consists of two sets~$P$ and~$X$ of vertex-interval-set pairs, an
implicit set of current time-maximal~\dkps{}~$R=(C,\mathcal I)$,
and a \emph{pool}~$B$.
Herein, $C$ is the set of vertices of the~\dkps{} and~$\mathcal I$ is a set of intervals~$I$ on which the $(C,I)$ forms a time-maximal~\dkp.
A pool is an auxiliary data structure that stores the number of $\Delta$-non-neighbors of the vertices of the~\dkp{} in any $\Delta$-frame.
While in the original \textsc{BronKerbosch} algorithm the sets~$P$ and~$X$ contain the common neighborhood of all vertices in~$R$, our sets~$P$ and~$X$ contain all vertices~$v$ with interval sets $\mathcal I_v$ such that for all~$I_v \in \mathcal I_v$ it holds that $I_v \sqin \mathcal I$ and $(C \cup \{v\}, I_v)$ is a time-maximal \dkp.
These vertices cannot be contained in the~\dnnh{} of more than~$k-1$ other vertices of~$C$ in each $\Delta$-frame~$\Delta_i$, $i\in I_v$. They can neither be contained in the~\dnnh{} of a vertex $w\in C$ during its \emph{critical} intervals, 
that is, the intervals where $w$ has exactly $k$ $\Delta$-non-neighbors in~$C$ (including $w$ itself). To maintain these properties after expanding the current~\dkp{}, we update the pool~$B$ with the \updatepool{} procedure after adding a new vertex $v\in V(P)$ to~$C$ and then update the sets~$P$ and~$X$ with the \update{} procedure, see \cref{alg:update}. For each vertex in $V(P)\cup V(X) \cup C$, we save the number of~\dnns{} of vertices in $C$ for each~$\Delta$-frame in the pool~$B$. We iterate through all vertex-interval-set pairs~$(v,\mathcal I_v)\in P$, call the \updatepool{} and \update{} procedures, and then do a recursive call with the updated sets~$R'$,~$P'$, and~$X'$.

One improvement over the basic idea of the algorithm of Himmel et al.~\cite{himmel2017adapting} is that we maintain a vertex set together with a \emph{set} of time intervals where this vertex set induces a time-maximal~\dkp{} as opposed to only one time interval. In our experiments, this turned out to yield a significant speed-up for computing $\Delta$-cliques ($\Delta$-$1$-plexes) (see \Cref{sec:experiment}). 

For a given temporal graph $G=(V,E,\omega)$, the input for the initial call to
enumerate all maximal \dkps\ is~${P=\{(v,\{[\omega-\Delta]\})\mid v\in V\}}$, $X=\emptyset$,
$R=(\emptyset,\{[\omega-\Delta]\})$, and~$B(v,i)=0$ for all
$v\in V$ and $i\in [\omega-\Delta]$. In the remainder of this work, we always assume this
initial call of the algorithm.

\subsection{Correctness of $\Delta$-$k$-\textsc{Bron\-Kerbosch}}
\label{sec:correct}
In this section we prove the correctness of \BKB. 
We start by claiming that the pools are correctly maintained by the
\updatepool\ function (see~\cref{alg:update}), that is, the value of each pool on each relevant $\Delta$-frame is equal to the amount of non-neighbors in the current \dkp~$R$.

\begin{lemma}
\label{pool}
In each recursive call of \BKB$(P,R=(C,\mathcal I),X,B)$, for each vertex-interval-set-pair $(w,\mathcal I_w) \in P \cup X \cup \{(c, \mathcal I) \mid c \in C\}$ and each $t \in I_w, I_w \in \mathcal I_w$ we have $$B(w,t) = |\non{w}{t} \cap C |.$$ 
\end{lemma}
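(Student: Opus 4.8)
The plan is to prove the equality as an invariant by induction on the depth of the recursion, showing that it holds for the initial call and is preserved by every recursive call. For the \emph{base case}, the initial call has $C=\emptyset$ and $B(v,i)=0$ for all $v,i$, and since $\non{w}{t}\cap\emptyset=\emptyset$ the claim $B(w,t)=|\non{w}{t}\cap C|$ holds trivially. For the \emph{inductive step} I would fix a call \BKB$(P,R=(C,\mathcal I),X,B)$ satisfying the invariant and analyse the recursive call triggered by a pair $(v,\mathcal I_v)\in P$, whose parameters are $C'=C\cup\{v\}$, $\mathcal I'=\mathcal I_v$, the pool $B'$ returned by \updatepool$(B,C',(v,\mathcal I_v))$, and the sets $P',X'$ returned by \update. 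Two elementary facts are needed throughout: first, $\Delta$-non-neighborhood is symmetric, i.e.\ $w\in\non{v}{t}$ iff $v\in\non{w}{t}$ (both assert that no time-stamped edge between $v$ and $w$ lies in $\Delta_t$), and in particular $v\in\non{v}{t}$ for every $t$ since $\{v,v\}\notin\binom{V}{2}$; second, $v\notin C$, because $(v,\mathcal I_v)\in P$.

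I would then read off from \updatepool{} exactly how $B$ changes. Its loop ranges over $P\cup X$ together with $\{(c,\mathcal I_v)\mid c\in C'\}$, and by the definition of $\non{v}{\mathcal I_v}$ and of $\sqcap$, for each processed pair $(w,\mathcal I_w)$ the time steps covered by $\mathcal I_{\text{crit}}$ are precisely those $t\in\mathcal I_w$ with $w\in\non{v}{t}$, on which $B(w,t)$ is increased by one. Hence, using the symmetry, for every such $(w,t)$ one has $B'(w,t)=B(w,t)+|\non{w}{t}\cap\{v\}|$. It remains to check that every pair $(w,t)$ for which the invariant is asserted at the child call is both touched by this analysis and covered by the induction hypothesis. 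For $(w,\mathcal I_w)\in P'$ (resp.\ $X'$) the vertices and intervals arise from $P$ (resp.\ $X$) by intersecting with $\mathcal I_v$ and possibly shrinking, so $t\in\mathcal I_v$ and the originating pair lies in $P\cup X$ with $t$ in its interval; for $c\in C$ we have $t\in\mathcal I_v\sqsubseteq\mathcal I$, so $(c,\mathcal I)$ lies in the parent invariant domain with $t$ in $\mathcal I$; and the pair $(v,\mathcal I_v)$ is processed via the term $\{(c,\mathcal I_v)\mid c\in C'\}$ and lies in $P$. In every case the induction hypothesis gives $B(w,t)=|\non{w}{t}\cap C|$, and combining this with the displayed increment yields
\[
B'(w,t)=|\non{w}{t}\cap C|+|\non{w}{t}\cap\{v\}|=|\non{w}{t}\cap C'|,
\]
using $C'=C\cup\{v\}$ and $v\notin C$, which is exactly the invariant for the child call.

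I expect the main difficulty to be bookkeeping rather than conceptual: one must align the set of pairs $(w,t)$ for which the invariant is claimed at the child call with the set of pairs that \updatepool{} actually modifies and to which the parent invariant applies. The single genuinely non-routine point is the self-non-neighbor convention $v\in\non{v}{t}$, which is what makes the count correct for $w=v\in C'$ (it contributes the ``$+1$'' coming from $v$ counting itself); the algorithm accounts for this by including $\{(c,\mathcal I_v)\mid c\in C'\}$, and hence the pair $(v,\mathcal I_v)$, in the \updatepool{} loop. I would flag this explicitly to rule out an off-by-one error.
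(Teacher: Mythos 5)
Your proposal is correct and follows essentially the same route as the paper's proof: induction on recursion depth, with the inductive step reading off the \updatepool{} increments as $B'(w,t)=B(w,t)+|\non{w}{t}\cap\{v\}|$ and concluding via $v\notin C$ that this equals $|\non{w}{t}\cap(C\cup\{v\})|$, together with the observation that every pair in the child's invariant domain traces back to a pair in the parent's domain. Your explicit treatment of the symmetry of the \dnnh{} and of the self-non-neighbor convention $v\in\non{v}{t}$ (which the paper leaves implicit) is a welcome clarification but not a different argument.
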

 \begin{proof}
The proof is by induction on the recursion depth, that is, the number $|C|$ of vertices of the \dkp{} in the current recursive call. 

Initially, the \BKB{} is called with $R=(\emptyset,\{[\omega-\Delta]\})$ and the pool~$B$ is
initialized for all~$w \in V$ and $t \in [\omega-\Delta]$ with $B(w,t) = 0 = |
\non{w}{t} \cap \emptyset|$.

Now let us assume that for a recursive call \BKB$(P,R=(C,\mathcal I),X,B)$ the condition holds.
Let $(v, \mathcal I_v) \in P$ be a vertex-interval-set-pair added to $R$, that is,~$R'= (C \cup \{v\},\mathcal I_v)$. 
Now for each $(w,\mathcal I_w) \in P \cup X \cup \{(c, \mathcal I_v) \mid c \in
C\}$ and~$t \in I, I\in \mathcal I_w \sqcap \mathcal I_v$ the pool-function
value~$B(w,t)$ is increased by~one if~$w$ and~$v$ are non-neighbors in the
$\Delta$-frame $\Delta_t$ in Line 4 to 7 in the \updatepool~procedure, that is,
\begin{align*}
   B'(w,t) &=
   \begin{cases}
     B(w,t) +1 & \text{if } v \in \non{w}{t}\\
     B(w,t)  & \text{else } 
   \end{cases}\\
   &= B(w,t) + |\non{w}{t} \cap \{v\}|&\\
   &= | \non{w}{t} \cap C | + |\non{w}{t}\cap \{v\}|&\\
   &= | \non{w}{t} \cap (C \cup \{v\}) |.& 
\end{align*}
The last equality holds because $v\notin C$. Next in the \update~procedure, the
sets~$P$ and~$X$ are updated according to the new \dkps{} in $R'$.
For each vertex-interval-set-pair $(w,\mathcal I'_w)$ in the new sets $P'$ and $X'$ it holds that $(w,\mathcal I_w) \in P \cup X$ with $\mathcal I'_w \sqsubseteq\mathcal I_w$, $\mathcal I'_w \sqsubseteq \mathcal I_v$, and consequently $\mathcal I'_w \sqsubseteq\mathcal I_w \sqcap \mathcal I_v$---see Lines 16 and 20 of the \update~procedure. 
Hence, the pool-function in the new recursive call \BKB$(P',R'=(C \cup \{v\},\mathcal
I_v),X',B')$ fulfills the claimed condition.
\end{proof}

Next, we show that $R$ contains time-maximal \dkp es. We further show that the
sets~$P$ and~$X$ contain all time-maximal vertex-interval-set pairs, which can be added to~$R$ such that the result still remains a time-maximal \dkp.
\begin{lemma}
\label{Lemma:PX}
In each recursive call of~\BKB$(P,R=(C,\mathcal I),X,B)$ the following holds:
\begin{enumerate}
\item for all $I \in \mathcal I$ it holds that $(C,I)$ is a time-maximal \dkp, \label{lemma:prop1}
\item for all~$(v,\mathcal I_v) \in P \cup X$ it holds that for all $I_v \in \mathcal I_v$, $(C \cup \{v\}, I_v)$ is a time-maximal \dkp, and\label{lemma:prop2}
\item all vertex-interval-set pairs $(v,\mathcal I_v)$ which satisfy 
the second property are contained in either $P$ or $X$.
\end{enumerate}
\end{lemma}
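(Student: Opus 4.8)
The plan is to prove all three properties simultaneously by induction on the recursion depth, that is, on $|C|$, mirroring the induction already used for \Cref{pool}. Throughout I write $F(\mathcal J) := \bigcup_{J \in \mathcal J} J$ for the set of frame indices covered by a set $\mathcal J$ of intervals. For the base case (the initial call with $C = \emptyset$, $\mathcal I = \{[\omega-\Delta]\}$, $P = \{(v, \{[\omega-\Delta]\}) \mid v \in V\}$, and $X = \emptyset$), Property~1 holds because $(\emptyset, [\omega-\Delta])$ is vacuously a \dkp{} spanning the entire lifetime, and Properties~2 and~3 hold because a single vertex $v$ has exactly one $\Delta$-non-neighbor (itself) in every $\Delta$-frame, so $(\{v\}, [\omega-\Delta])$ is the unique time-maximal \dkp{} on $v$ and it is recorded in $P$.

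For the inductive step I would fix a recursive call \BKB$(P, R=(C,\mathcal I), X, B)$ satisfying the three properties, pick $(v,\mathcal I_v) \in P$, and analyse the child call with $C' = C \cup \{v\}$, $\mathcal I' = \mathcal I_v$, and the sets $P', X'$ produced by \textsc{Update} (applied to $P$ and to $X$, respectively). Property~1 for the child is immediate: since $\mathcal I' = \mathcal I_v$ and $(v,\mathcal I_v) \in P$, the inductive Property~2 says that each $(C', I)$ with $I \in \mathcal I_v$ is a time-maximal \dkp.

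The core of the argument is Properties~2 and~3 for the child, and the key observation I would exploit is that the \dkp{} condition is checked frame by frame: $(C' \cup \{w\}, I)$ is a \dkp{} exactly when every frame index in $I$ lies in the admissible set $S_w := \{\, i : \text{for all } u \in C' \cup \{w\},\ |\non{u}{i} \cap (C' \cup \{w\})| \le k \,\}$, so the time-maximal interval sets on $w$ are precisely the maximal intervals of $S_w$. I would then characterise $S_w$ explicitly: splitting $|\non{u}{i} \cap (C' \cup \{w\})|$ into $|\non{u}{i} \cap C'|$ plus the indicator of $u \in \non{w}{i}$, using the symmetry $w \in \non{u}{i} \Leftrightarrow u \in \non{w}{i}$, and using that $w$ is always its own $\Delta$-non-neighbor, one obtains that $i \in S_w$ iff $(C', \Delta_i)$ is a \dkp{} (equivalently, by the inductive Properties~2 and~3 applied to $(v,\mathcal I_v)$, $i \in F(\mathcal I_v)$) and no vertex $u \in C' \cup \{w\}$ that is \emph{critical} at $i$ (i.e.\ with $B'(u,i) = k$) satisfies $u \in \non{w}{i}$. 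By \Cref{pool} the pool $B'$ returned by \textsc{UpdatePool} equals $|\non{\cdot}{\cdot} \cap C'|$, and $\text{Crit}$ collects exactly the pairs reaching value $k$; hence the set $\text{Crit}[C' \cup \{w\}] \sqcap \non{w}{\mathcal I_w}$ that \textsc{Update} subtracts is precisely the set of frames violating the second condition, so $F(\mathcal I_w')$ equals $S_w$. Since $\sqcap$ and $\smi$ always return maximal intervals, $\mathcal I_w'$ consists of the maximal intervals of $S_w$, giving Property~2. Property~3 then follows because every candidate $w$ with $S_w \neq \emptyset$ makes $(C \cup \{w\}, \Delta_i)$ a \dkp{} for any $i \in S_w$, so by the inductive Property~3 it already appears in $P \cup X$ and is therefore processed by \textsc{Update}, which returns the complete maximal interval set.

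The step I expect to be the main obstacle is verifying that \textsc{Update} neither removes a good frame nor keeps a bad one, even though $\text{Crit}$ only tracks the \emph{newly} critical vertices. The delicate point is that a vertex $u \in C$ may already have had $k$ $\Delta$-non-neighbors in $C$ before $v$ was added and so never enters $\text{Crit}$; I would resolve this by invoking the inductive Property~2, which guarantees that any candidate $w$ admitted in the parent call was already compatible at frame $i$ with every such previously critical vertex (otherwise $(C \cup \{w\}, \Delta_i)$ would not have been a \dkp). Thus, on the frames \textsc{Update} actually inspects, the only newly forbidden frames are those induced by vertices whose $\Delta$-non-neighbor count in $C'$ rises to exactly $k$ through the addition of $v$---including $v$ itself and, via its self-non-neighbor, $w$ itself---and these are exactly the ones recorded in $\text{Crit}$. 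Making this compatibility argument airtight, together with the careful bookkeeping of $v$ and of the candidate $w$ as their own $\Delta$-non-neighbors, is where most of the work lies.
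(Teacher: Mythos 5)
Your proof is correct and follows essentially the same route as the paper's: induction on the recursion depth, with the base case at $C=\emptyset$ and the inductive step reduced to showing that \textsc{UpdatePool} and \textsc{Update} retain exactly the frames on which the enlarged vertex set still forms a time-maximal \dkp{}. If anything, your handling of the delicate point---that Crit only records vertices whose non-neighbor count rises to exactly $k$ when $v$ is added, while vertices already critical with respect to $C$ are harmless because inductive Property~2 forces them to be neighbors of every surviving candidate at the relevant frames---is spelled out more carefully than in the paper, which simply asserts that Crit contains all critical vertex-interval-set pairs.
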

\begin{proof}
All properties can be proven by induction on the recursion depth, that is, the
number~$|C|$ of vertices of the \dkp{} in the current recursive call.

Initially, \BKB{} is called with $R=(C,\mathcal I)=(\emptyset,\{[\omega-\Delta]\})$, $P=\{(v,\{[\omega-\Delta]\}) \mid v \in V\}$, and~$X=\emptyset$. The interval set $\mathcal I$ contains the whole lifetime of the temporal graph and~$(\emptyset,[\omega])$ is a trivial time-maximal \dkp.
For all $(v,\mathcal I_v) \in P \cup X$ it holds that~$\mathcal I_v = \{[\omega-\Delta]\}$ and, thus, that~$(\emptyset \cup \{v\},[\omega-\Delta])$ is a trivial time-maximal \dkp. Obviously, $P \cup X$ contains all vertex-interval-set pairs that form time-maximal \dkps~with~$R$. 

Now let us assume that for a recursive call \BKB$(P,R=(C,\mathcal I),X,B)$ all properties hold.
Let $(v, \mathcal I_v) \in P$ be a vertex-interval-set-pair added to $R$, that is,~$R'= (C',\mathcal I_v)$ with $C' = C \cup \{v\}$. By induction hypothesis, for all $I \in \mathcal I_v$ it holds that $(C',I)$ is a time-maximal \dkp. It remains to show that $P$ and $X$ are suitably adapted for the new recursive call on~$R'$. 

We show that $P'$ and $X'$ satisfy the above properties after a call of the
\update\ procedure.
The \update{} procedure gets as input the set $P$ (or $X$) of vertex-interval-set pairs, the
vertex set $C'$ of the current \dkp{} set $R'$, a set~$\text{Crit}$ of critical vertex-interval-set pairs, and the newly added vertex-interval-set pair $(v,\mathcal I_v)$. 
The set $\text{Crit}$ contains all vertices $w$ and $\Delta$-frames $\Delta_i$ such that the vertex $w$ has $k$ non-neighbors in~$C'$ within~$\Delta_i$. More formally, it contains all vertex-interval-set pairs $(w, \mathcal I_{\text{Crit}})$ such that for all $i \in I_{\text{Crit}},I_{\text{Crit}} \in \mathcal I_{\text{Crit}}$, it holds that $ [i,i] \sqin \mathcal I_v$ and $|\non{w}{i} \cap  C' |=k$.
We now show that \update{} works as intended for $P$. The case for $X$ is
analogous. We create $P'$ as follows:

For each $(w,\mathcal I_w)\in P$ with $w \neq v$, the \update\ procedure
reduces the interval set~$\mathcal I_w$ to~$\mathcal I_v$, that is, $\mathcal I_w'= \mathcal I_w \sqcap \mathcal I_v$. 
Now, all $\Delta$-frames $\Delta_i$ with $i \in I, I \in \mathcal I_w'$, are removed in Lines $19$ to $20$ of the \update~procedure for which
\begin{enumerate}
\item vertex $w$ has already $k$ non-neighbors in $C'$, that is, $|\non{w}{i} \cap C' |=k$ (recall that~$w \in \non{w}{i}$), or 
\item vertex $w$ has a non-neighbor in $C$ that has already~$k$ non-neighbors in $C'$, that is, there exists a vertex~$c \in C'$ with $c \in \non{w}{i}$ and $|\non{c}{i}\cap~C'|=k$.  
\end{enumerate}
In both cases $w$ cannot be added to $C'$. 
In the end, maximal intervals of the remaining $\Delta$-frames are formed. 
By induction hypothesis all intervals in $\mathcal I_w$ were time-maximal with respect to $C$ and all intervals that form a time-maximal \dkp{} with $C$ were contained in $\mathcal I_w$. Adding a vertex $v$ to a \dkp~only lessens the set of possible $\Delta$-frames of an additional vertex $w$.
Hence, all vertex-interval-set pairs in~$P'$ (and~$X'$) are time-maximal and complete with respect to $C'$. 

Thus, the recursive call \BKB$(P',R'=(C',\mathcal I_v),X',B')$ fulfills the conditions stated in \cref{Lemma:PX}.
\end{proof}

We are now ready to prove the correctness of \BKB. 
\begin{theorem}
\label{theorem:correct}
For any given temporal graph $G=(V,E,\omega)$, \BKB{} computes all maximal~\dkps{} of~$G$.
\end{theorem}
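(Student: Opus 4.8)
The plan is to establish the two inclusions that together say the output equals the set of maximal \dkps{}: \emph{soundness} (every tuple the algorithm adds is a maximal \dkp) and \emph{completeness} (every maximal \dkp{} is added). Both rest on \Cref{Lemma:PX}, whose invariants on $R$, $P$, and $X$ hold at every call, and on \Cref{pool} for the correctness of the pool that drives \update{} and \updatepool{}. For soundness I would look at the moment $(C,I)$ is added in the loop over $I \in \mathcal I$. Time-maximality is immediate from the first property of \Cref{Lemma:PX}. For vertex-maximality I argue by contradiction: suppose some $v \notin C$ makes $(C\cup\{v\},I)$ a \dkp{}, and let $I^+ \supseteq I$ be its time-maximal extension for the vertex set $C \cup \{v\}$. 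Then $(C,I^+)$ is also a \dkp{}, since shrinking the vertex set cannot increase any vertex's number of \dnns{} inside it; time-maximality of $(C,I)$ then forces $I^+ = I$, so $(C\cup\{v\},I)$ is itself time-maximal. By the second and third properties of \Cref{Lemma:PX}, the pair $(v,\mathcal I_v)$ with $I \in \mathcal I_v$ would lie in $P \cup X$, contradicting the guard on \cref{line:addcondition}. Hence every added tuple is vertex- and time-maximal, i.e.\ maximal.

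For completeness I fix an arbitrary maximal \dkp{} $(C^*,I^*)$ and prove by induction on $|C^* \setminus C|$ that the recursion reaches a call with $C = C^*$ and $I^* \in \mathcal I$. The induction invariant is the existence of a call \BKB{}$(P,(C,\mathcal I),X,B)$ with $C \subseteq C^*$ such that every $w \in C^* \setminus C$ occurs as a pair $(w,\mathcal I_w) \in P$ with $I^* \sqin \mathcal I_w$; the base case is the initial call, where $P$ contains each vertex on the full lifetime. In the inductive step I follow the branch that recurses on the \emph{first} $w \in C^* \setminus C$ reached in the loop over $P$. Recursing on this $w$ re-establishes the invariant for $C \cup \{w\}$: every other target vertex $w'$ survives \update{} on an interval covering $I^*$ because $(C^*,I^*)$ being a \dkp{} makes each $(C \cup \{w,w'\},I^*)$ a \dkp{} as well, and no target vertex has been moved into $X$, since every vertex processed before $w$ lies outside $C^*$ and the vertex sets underlying $P$ and $X$ are disjoint. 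When finally $C = C^*$, the third property of \Cref{Lemma:PX} makes $\mathcal I$ the complete set of time-maximal intervals of $C^*$, so $I^* \in \mathcal I$; the guard on \cref{line:addcondition} holds for $I^*$ by the contrapositive of the soundness argument (vertex-maximality of $(C^*,I^*)$ excludes any pair carrying $I^*$), and $(C^*,I^*)$ is output.

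The main obstacle is the completeness direction, specifically the interplay between the \BK{}-style $P$/$X$ split and the interval bookkeeping. I must ensure at once that the remaining candidate vertices of $C^*$ stay available in $P$ (never slipping into $X$ before they are used) and that the target interval $I^*$ is never dropped yet is recognized as exactly time-maximal precisely when the vertex set reaches $C^*$. Propagating the invariant hinges on the disjointness of $P$ and $X$ and on the completeness part of \Cref{Lemma:PX}; the step I expect to need the most care is verifying that \update{} really retains each surviving pair on a (time-maximal) interval covering $I^*$.
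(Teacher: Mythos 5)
Your proposal is correct and takes essentially the same route as the paper's proof: soundness via \Cref{Lemma:PX} combined with the guard in Line~\ref{line:addcondition} (your explicit construction of the time-maximal extension $I^+$ with $I^+=I$ is in fact a welcome sharpening of the paper's terser argument), and completeness by induction along the recursion tree, following the first candidate of $C^*\setminus C$ in loop order and using \Cref{Lemma:PX} to keep the remaining candidates in $P$ on intervals covering $I^*$. One small correction: in the final step, $I^*\in\mathcal I$ follows from property~1 of \Cref{Lemma:PX} (the intervals of $\mathcal I$ are time-maximal for $C^*$, so the one covering $I^*$ must equal it by time-maximality of $(C^*,I^*)$), not from property~3, which concerns $P\cup X$ rather than $\mathcal I$.
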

\begin{proof}
Let $R^*=(C^*,I^*)$ be a maximal \dkp. We show that there will be a recursive
call adding $R^*$ to the solution. Since we are building \dkps~bottom up,
there will be a recursive call of \BKB{} on~$(P,R,X,B)$ with $R=(C,\mathcal I)$,
$C \subseteq C^*$, $I^* \sqin \mathcal I$ and~${|C|=|C^*|-\ell}$ for all $\ell=
0,1,\ldots,|C^*|$. Additionally, all vertices $v \in C^* \setminus C$ with $I^*
\sqin \mathcal I_v $, called \emph{candidates}, will be contained in $P$. We
show this by induction on~$|C|$. 

Clearly, in the initial call, $C=\emptyset \subseteq C^*$ and $I^* \sqin
\{[\omega-\Delta]\}$. Since $P=\{(v,\{[\omega-\Delta]\})\mid v\in V\}$, every vertex~$v\in~C^*$ is
contained in~$P$. Now assume that there is a recursive call with $(P,R,X,B)$, where~$R=(C,\mathcal I)$,
$C \subseteq C^*$, $I^* \sqin \mathcal I$, and all candidates are contained in $P$.
Consider the first candidate~$(v,\mathcal I_v)$ in the for-loop of that
recursive call.
After adding $v$ to $C$, since $R^*$ is a \dkp, according to \cref{Lemma:PX} all
other candidates are still contained in $P'$ after a call of \update.
Since~$(v,\mathcal I_v)$ was a candidate, it holds for the new \dkp{} set $R'=(C'=C\cup
\{v\},\mathcal I_v)$ that $C' \subseteq C^*$ and~$I^*\sqin \mathcal I_v$.
Hence, by induction, there is a recursive call with $R=(C^*,\mathcal I^*)$
with $I^*\in \mathcal I^*$ and, since $R^*$ is maximal, there is no
vertex-interval-set pair $ (v,\mathcal I) \in P\cup X$ with $I^* \sqin \mathcal
I$.
Thus,~$(C^*,I^*)$ is enumerated. 

Now assume that some pair $(C,I)$ is added to the solution.
We show that~$(C,I)$ is a maximal \dkp{}.
By \cref{Lemma:PX} we
know that $(C,I)$ is a time-maximal \dkp\ and we know that all
vertex-interval-set pairs~$(v,\mathcal I_v)$, where~$R'=(C\cup
\{v\},\mathcal I_v)$ is a set of \dkps{}, are contained in~$P\cup X$. Since we check whether $\forall (w,
\mathcal I_w) \in P \cup X$ and~$\forall I_w \in \mathcal I_w 
\colon I_w \not = I$ in Line~\ref{line:addcondition} of \cref{alg:bronkerdelta}, it follows that there
is no vertex in~$P$ or~$X$ which can be added without decreasing the
interval $I$, hence, $(C,I)$ is also vertex-maximal. Thus, $(C,I)$ is a maximal
\dkp.
\end{proof}

\subsection{Running Time of $\Delta$-$k$-\textsc{Bron\-Kerbosch}}
\label{sec:time}
We have shown that \BKB~enumerates all maximal \dkps\ in a temporal
graph.
In this section, we analyze its running time in four steps.
First, we determine the running time of precomputing the \dnnh s. Then, we
analyze the running time of \updatepool\ and \update.
In a third step, we prove an upper bound on the number of time-maximal \dkps{} that depends on the \dsd{} of the temporal graph
and show that there is at most one recursive call for each of them. 
Finally, we combine our findings to obtain an upper bound on the running time of \BKB.

The running time of computing the \dnnh{} has a big influence on the overall running time since it is accessed
multiple times in each recursive call. However, it can be
precomputed once before the initial call of \BKB. In the following lemma, we
show an upper bound on the running time of this computation assuming that
the edges are sorted by their time stamps. 

\begin{lemma}
\label{lemma-building-nnh}
If the edges are sorted by their time stamps, then the~\dnnh\ for all vertices over the
whole lifetime, that is, $\non{v}{\{[\omega-\Delta]\}}$ for all $v \in V$, can be computed in $\mathcal{O}(|V|^2 + |E|)$ time.
\end{lemma}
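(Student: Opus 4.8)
The plan is to reduce the computation of the \dnnh{} to a per-pair interval problem and then exploit the fact that the edges arrive in sorted order. Fix two distinct vertices $u,v\in V$ and let $t_1<t_2<\dots<t_m$ be the time stamps of the edges between them. By definition, $u\in\non{v}{i}$ holds exactly when the $\Delta$-frame $\Delta_i=[i,i+\Delta]$ contains none of these time stamps. A single time stamp $t_j$ lies in $\Delta_i$ iff $t_j-\Delta\le i\le t_j$, so the set of frames in which $u$ and $v$ are \emph{neighbors} is the union of intervals $\bigcup_{j}[\max(1,t_j-\Delta),\min(t_j,\omega-\Delta)]$, restricted to the frame range $[\omega-\Delta]$. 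The frames in which they are \dnns{} are precisely the complement of this union within $[\omega-\Delta]$, and the maximal intervals of this complement are exactly the interval set $\mathcal I_u$ associated with $u$ in $\non{v}{\{[\omega-\Delta]\}}$. Because the non-neighbor relation is symmetric, each unordered pair need only be processed once and the resulting interval set stored for both $v$ and $u$.

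Concretely, I would first bucket the edges by their endpoint pairs. Using a $|V|\times|V|$ table whose entry $(u,v)$ points to an initially empty list, I scan the sorted edge set once and append each time stamp to the list of its endpoint pair; since the input is sorted by time stamp, every such list comes out already sorted, so no additional sorting is needed. For every pair with a nonempty list I then make one linear pass to build the union of neighbor windows $[\max(1,t_j-\Delta),\min(t_j,\omega-\Delta)]$, merging consecutive overlapping windows on the fly, and a second pass to take the complement inside $[\omega-\Delta]$, yielding the maximal non-neighbor intervals. Finally, every pair whose list is empty contributes the single interval $[\omega-\Delta]$, i.e.\ $u$ is a \dnn{} of $v$ over the whole lifetime; these pairs are read off directly from the table.

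For the running time, initialising the $|V|\times|V|$ table and enumerating all pairs with empty edge lists together cost $\mathcal O(|V|^2)$. The bucketing scan touches each edge once, in $\mathcal O(|E|)$ time. The interval-merging and complementation work for a pair with $m$ edges runs in $\mathcal O(m+1)$ time, where the $+1$ covers the at-most-one extra complement interval that can arise even when the windows tile the range, so summed over all pairs this is $\mathcal O(|E|+|V|^2)$. Adding these contributions gives the claimed $\mathcal O(|V|^2+|E|)$ bound, and the stored interval sets are exactly $\non{v}{\{[\omega-\Delta]\}}$ for every $v\in V$.

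The main obstacle is the amortised accounting rather than any single step: one has to see that the $|V|^2$ term is both unavoidable and sufficient---it pays for the empty-pair non-neighborhoods, whose total size can be quadratic, as well as for the constant per-pair overhead of complementation---while the genuine edge-dependent work stays linear thanks to the pre-sorted input. Some care is also needed at the boundaries of the frame range $[\omega-\Delta]$ when clamping the windows $[\max(1,t_j-\Delta),\min(t_j,\omega-\Delta)]$, but this only affects constant factors.
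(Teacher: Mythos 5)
Your proposal is correct and follows essentially the same strategy as the paper's proof: initialize per-pair non-neighborhoods over the whole lifetime in $\mathcal O(|V|^2)$ time, use the time-sorted edges to remove (equivalently, complement the union of) the $\Delta$-neighbor windows $[t-\Delta,t]$ in $\mathcal O(|E|)$ amortized time, and then assemble the per-vertex \dnnh{}s from the per-pair interval lists. The only cosmetic difference is that you phrase the per-pair step as ``build the union of neighbor windows, then complement'' while the paper phrases it as ``cut each window out of the full lifetime''; these are the same computation.
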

\begin{proof}
First, for each pair of vertices $v,w$ we initially set their \dnnh\ to the whole lifetime of the temporal graph. 
The initialization can be done in $\mathcal O(|V|^2)$ time.   
Then, for each time-stamped edge $(\{v,w\},t)$ the $\Delta$-neighborhood interval $[t-\Delta,t]$ is cut out of the \dnnh\ of $v,w$. Due to the sorting of the edges by time stamps, this can be done in~$\mathcal O(|E|)$ time. We end up with a sorted list of \dnnh\ intervals for each vertex pair with at most~$\mathcal O(\min\{|E|,\omega\})$ many non-overlapping time intervals. 
In the last step, the \dnnh\ $\non{v}{\{[\omega-\Delta]\}}$ for each vertex $v$ is computed in $O(|V|^2 + |E|)$ time using the sorted lists of \dnnh\ intervals of all vertex pairs containing $v$.   
\end{proof}

Next, we determine the running time of \updatepool{} and \update. 
Before starting the proof, we first briefly discuss the structure of the pool function.
In the pool function~$B$, we store for each vertex $v$ and each \dfr\ $\Delta_t$ the number of \dnn s in the current \dkp. This information can be maintained for each vertex by storing time intervals of the same function values. For a vertex $v$, each change in the number of \dnn s in the current \dkp\ is induced by a time-stamped edge between~$v$ and a vertex in the current \dkp. Hence, the number of time intervals for each vertex~$v$ in~$B$ is bounded by $\mathcal O(\min\{|\{(\{v,w\},t)\in E\}|,\omega\})$. 

\begin{lemma}\label{theorem:addupdate}
The procedures \updatepool{} and \update{} take~$\mathcal{O}(\min\{|E|,\omega\} \cdot|V|^2)$ time.
\end{lemma}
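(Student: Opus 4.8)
The plan is to analyze the two procedures in \cref{alg:update} separately, bounding the cost of each nested loop by counting how many elementary interval-set operations are performed and multiplying by the per-operation cost. Throughout I would write $m_v := \min\{|\{(\{v,w\},t)\in E\}|,\omega\}$ for the number of maximal time intervals needed to represent the $\Delta$-non-neighborhood (or the pool function) of a single vertex, and note from the discussion preceding the statement that $m_v \le \min\{|E|,\omega\}$. The fundamental observation is that every set of intervals arising in the algorithm---an entry $\mathcal I_w$, a non-neighborhood $\non{w}{\mathcal I_w}$, or a row of the pool---has at most $\mathcal O(\min\{|E|,\omega\})$ maximal intervals, so each binary interval-set operation ($\sqcap$, $\sqcup$, $\smi$) on two such sets runs in $\mathcal O(\min\{|E|,\omega\})$ time by a single merge-style sweep over the sorted endpoint lists.

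First I would bound \updatepool. The outer loop runs over all pairs in $P \cup X \cup \{(c,\mathcal I_v) \mid c \in C\}$, which is $\mathcal O(|V|)$ pairs since each vertex contributes at most one pair. For each pair the algorithm computes one intersection $(w,\mathcal I_w)\sqcap\non{v}{\mathcal I_v}$ in $\mathcal O(\min\{|E|,\omega\})$ time, and then the inner loop iterates over the individual time steps $t$ in the resulting $\mathcal I_{\text{crit}}$. Here the key step is to argue that one does not actually touch each time step $t$ individually but processes whole intervals at once: incrementing $B'(w,t)$ over an interval and detecting the threshold crossing $B'(w,t)=k$ can be done per maximal interval, so the inner work is again $\mathcal O(\min\{|E|,\omega\})$ per pair. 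Multiplying, \updatepool\ costs $\mathcal O(|V|\cdot\min\{|E|,\omega\})$, which is absorbed into the claimed bound.

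Next I would bound \update. The outer loop runs over $P_{\text{reduced}}[V(P)\setminus\{v\}]$, again $\mathcal O(|V|)$ pairs. Inside, for each $(w,\mathcal I_w)$ the procedure iterates over the inner set $\text{Crit}[C\cup\{w\}]\sqcap\non{w}{\mathcal I_w}$ and performs a difference $\mathcal I_w\gets\mathcal I_w\smi\mathcal I_u$. The inner set ranges over vertices $u\in C\cup\{w\}$, so it has at most $|V|$ pairs, each of bounded interval count, and each difference operation costs $\mathcal O(\min\{|E|,\omega\})$. This gives $\mathcal O(|V|\cdot\min\{|E|,\omega\})$ per outer iteration, and thus $\mathcal O(|V|^2\cdot\min\{|E|,\omega\})$ for \update\ overall, matching the stated bound; the computation of $P_{\text{reduced}}=P[\mathcal I_v]$ beforehand is only $\mathcal O(|V|\cdot\min\{|E|,\omega\})$ and is dominated.

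The main obstacle I anticipate is the inner-loop counting in both procedures, namely justifying rigorously that iterating ``$\text{for } t \in I_{\text{crit}}$'' over individual time steps can be replaced by interval-wise processing without blowing up the cost by a factor of $\omega$. I would handle this by appealing to the representation invariant that every interval set in play is stored as $\mathcal O(\min\{|E|,\omega\})$ maximal intervals, so both the per-step increment of $B$ and the critical-pair detection are performed once per maximal interval rather than once per time step; the pool update and the union $\text{Crit}\sqcup(w,[t,t])$ then contribute only constantly many endpoint manipulations per interval. A secondary subtlety is confirming that the inner set $\text{Crit}[C\cup\{w\}]\sqcap\non{w}{\mathcal I_w}$ is bounded by $\mathcal O(|V|)$ pairs (one per candidate non-neighbor in the current plex) and that each has bounded interval count; with these two facts established, the products multiply out cleanly to the claimed $\mathcal O(\min\{|E|,\omega\}\cdot|V|^2)$.
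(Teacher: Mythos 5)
Your proposal is correct and follows essentially the same route as the paper's proof: bound every interval set by $\mathcal O(\min\{|E|,\omega\})$ maximal intervals (so each $\sqcap$, $\sqcup$, $\smi$ costs $\mathcal O(\min\{|E|,\omega\})$), process the pool increments and criticality checks per maximal interval rather than per time step, and then multiply the $\mathcal O(|V|)$ outer iterations by the $\mathcal O(|V|\cdot\min\{|E|,\omega\})$ inner work in \update{} (respectively $\mathcal O(\min\{|E|,\omega\})$ in \updatepool{}). Your accounting of the inner loop of \update{} (at most $|V|$ pairs, one difference operation each) is in fact stated more cleanly than in the paper, but it is the same argument yielding the same bound.
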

\begin{proof} 
First, let us briefly discuss the structure of the pool function.
In the pool function~$B$, we store for each vertex $v$ and each \dfr\ $\Delta_t$ the number of \dnn s in the current \dkp. This information can be maintained for each vertex by storing time intervals of the same function values. For a vertex $v$, each change in the number of \dnn s in the current \dkp\ is induced by a time-stamped edge between~$v$ and a vertex in the current \dkp. Hence, the number of time intervals for each vertex~$v$ in~$B$ is bounded by $\mathcal O(\min\{|\{(\{v,w\},t)\in E\}|,\omega\})$. 

Also note that in the beginning of the algorithm, all interval sets are sorted by the start time of the intervals and that all intervals in such a set are pairwise non-overlapping. These two properties of interval sets are preserved during all of our interval set operations. Furthermore, each interval in an interval set is induced by a different time-stamped edge. Hence, the size of each set of intervals can be bounded by $\mathcal O(\min\{|E|,\omega\})$. Thus, each interval set cut operation can be done $\mathcal O(\min\{|E|,\omega\})$ time as shown in~\citet[Lemma 4]{himmel2017adapting}.

Now, let us consider the running time of the \updatepool\ procedure.
Initializing~$\text{Crit}$ and~$B'$ takes~$\mathcal O(|V|)$ and $\mathcal O( \min\{|E|,|V|\omega\})$ time, respectively.  
Next, for each~$(w,\mathcal I_w) \in  P \cup X \cup \{(c,\mathcal I_v) \mid c \in C\}$ we compute the cut with the \dnnh~$\non{v}{ \mathcal I_v}$ in $\mathcal O( \min\{|E|,\omega\})$ time. Updating the pool function also takes $\mathcal O( \min\{|E|,\omega\})$ time. Filtering the critical time intervals can be done during the update with no extra time consumption. There are $|V|$ vertex-interval-set pairs in $P \cup X \cup \{(c,\mathcal I_v) \mid c \in C\}$. Hence, the overall time is in $\mathcal O(|V| \min\{|E|,\omega\}))$.  

Next, consider the \update\ procedure.  
Reducing $P$ to the interval set $\mathcal I_V$ takes $\mathcal O(|V| \min\{|E|,\omega\})$ time.   
There are at most $|V|$ elements in the set $P_{\text{reduced}}$. For each $(w,\mathcal I_w) \in P_{\text{reduced}}[V(P) \setminus \{v\}]$, we compute the cut $\text{Crit}[C \cup \{w\}] \sqcap  \non{w}{\mathcal I_w}$ in~$\mathcal O(|V| \min\{|E|,\omega\})$ time. In this cut, there are at most~$\mathcal O(|V| \min\{|E|,\omega\}$ vertex-interval-set pairs. For each of these elements $(u,\mathcal I_u)$ we can compute~$I_w \setminus I_u$ again in~$\mathcal O(\min\{|E|,\omega\}$ time.

Altogether, the running time of this whole procedure is in $\mathcal O(|V|^2 \min\{|E|,\omega\})$.
\end{proof}

Moving forward, we now upper-bound the number of recursive calls of \BKB.
\begin{lemma}
\label{lem:numcall}
For each time-maximal \dkp~$(C,I)$ of a temporal graph $G=(V,E,\omega)$, there is at
most one recursive call of \BKB~with $R=(C,\mathcal I)$ with $I\in \mathcal I$
as input.
\end{lemma}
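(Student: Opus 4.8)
The plan is to argue entirely through the shape of the recursion tree of \BKB{}, adapting the classical non-redundancy argument of \BK{}. I model one execution of \BKB{} as a rooted tree whose nodes are the recursive calls, where the children of a node are the iterations of its for-loop, one child per pair $(v,\mathcal I_v)\in P$ that is processed. Suppose, for contradiction, that there are two distinct recursive calls $\nu_1\neq\nu_2$, both having vertex set $C$ and an interval set containing $I$.

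First I would record two monotonicity properties obtained by inspecting \cref{alg:bronkerdelta,alg:update}. (i) Every recursive call adds exactly one vertex to $C$, so the vertex set strictly grows along every root-to-node path; in particular, no recursive call with vertex set $C$ can be an ancestor of another such call. (ii) Whenever a pair $(v,\mathcal I_v)$ is processed, the resulting child is created with the set $P'$ returned by \update{}, which is built only from $P[\mathcal I_v][V(P)\setminus\{v\}]$; hence $V(P')\subseteq V(P)\setminus\{v\}$, so the candidate vertex set $V(P)$ can only shrink down the tree, and a vertex that has left $P$ never reappears in $P$ within the corresponding subtree.

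By property (i), neither $\nu_1$ nor $\nu_2$ is an ancestor of the other, so their lowest common ancestor $\nu$ is a proper ancestor of both and has vertex set $C''\subsetneq C$. The two paths leave $\nu$ through two different children, i.e.\ through processing two distinct pairs; since each vertex occurs in at most one pair of $P$, these correspond to two distinct vertices $u,w$, and because every vertex added below $\nu$ must end up in $C$, we have $u,w\in C\setminus C''$. Now I would exploit the order of the for-loop at $\nu$: assume without loss of generality that $u$ is processed before $w$. By the time $w$ is processed, $u$ has already been removed from $P$ (and placed into $X$), so $u\notin V(P)$ when the child of $\nu$ leading toward $\nu_2$ is created. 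By property (ii), $u\notin V(P)$ everywhere in that subtree, and in particular along the entire path to $\nu_2$.

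This yields the contradiction: reaching $\nu_2$ with vertex set $C\ni u$, starting from $\nu$ with vertex set $C''\not\ni u$, requires $u$ to be added to the current vertex set at some call in the subtree rooted at the $w$-child of $\nu$; but vertices are added only after being drawn from $P$, and $u$ never lies in $P$ in that subtree. Hence $\nu_1$ and $\nu_2$ cannot both exist. I expect the main obstacle to be the bookkeeping behind property (ii): one must read off from the \update{} procedure that the new candidate set is genuinely derived from $P$ with the processed vertex deleted and is never re-expanded, so that the classical ``$X$ blocks re-insertion'' phenomenon survives the interval cutting specific to the temporal $k$-plex setting. Note that neither the interval sets nor the hypothesis $I\in\mathcal I$ enter the contradiction, so this argument in fact establishes the slightly stronger statement that at most one recursive call has vertex set $C$, from which the stated lemma follows immediately.
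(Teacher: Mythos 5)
Your proof is correct and follows essentially the paper's own argument: both proofs take the least common ancestor of the two hypothetical calls, observe that the diverging branches correspond to two distinct vertices $u \neq w$ of $C$ drawn from the candidate set, assume without loss of generality that $u$ is processed first, and derive the contradiction that $u$ can never re-enter a candidate set---hence never enter the vertex set---anywhere in the sibling subtree leading to the second call. The only substantive difference is presentational: the paper phrases the blocking via membership in $X$ and carries interval-covering conditions along, whereas you note that the monotone shrinking of $V(P)$ (your property (ii)) already suffices, which indeed yields the slightly stronger conclusion that no two recursive calls share the same vertex set, irrespective of the interval sets.
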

\begin{proof}
Assume that there are two recursive calls $A$ and~$B$ with the same $R=(C,\mathcal I)$ as part of the input. 
Let $R'=(C',\mathcal I')$, with $C' \subset C$ and for all $I\in \mathcal I$, $I \sqin \mathcal I'$, be in the input of the least common ancestor of $A$ and $B$ in the tree of recursive calls. Let~$P'$ be the candidate set of that recursion call. There must be two vertex-interval-set pairs~$(u,\mathcal I_u),(w,\mathcal I_w)\in P'$ that lead to the recursive calls $A$ and $B$, respectively.

Clearly, for all $I\in \mathcal I$ it holds that $I \sqin \mathcal I_u$, ${I \sqin \mathcal I_w}$, and~$\{u,w\} \subseteq C$. Since for a fixed vertex all intervals in a candidate set are distinct, as shown in \cref{Lemma:PX}, it follows that $u \neq w$. Without loss of generality, assume that $(u,\mathcal I_u)$ is considered first in the for-loop over all candidates in~$P$. When $(w,\mathcal I_w)$ is considered, then the vertex-interval-set pair $(u,\mathcal I_u)$ is in $X'$ and not in $P'$. Hence, all following recursion calls do not consider $(u,\mathcal I_u)$. Recall that we assumed that the recursive call~$B$ outputs $R=(C,\mathcal I)$. Since for all~$I\in \mathcal I$ it holds that $I \sqin \mathcal I_u$ and~$u \in C$, it follows that a vertex-interval-set pair $(u,\mathcal I_u')$ with the property that 
\begin{itemize}
\item[a)] for all $I\in \mathcal I$, it holds that $I \sqin \mathcal I'_u$ and 
\item[b)] for all $I'\in \mathcal I'_u$, it holds that $I' \sqin \mathcal I_u$ 
\end{itemize}
needs to be considered in a future call. 
This contradicts the fact that we do not consider vertex-interval-set pairs that are contained in $X$. 
Thus, there cannot be two recursive calls of~\BKB~with the same $R$.
\end{proof}
Finally, we upper-bound the number of time-maximal \dkps~in a temporal graph $G=(V,E,\omega)$ using the $\Delta$-slice degeneracy value $d$ (\Cref{def:dslicedeg}) of $G$.
This upper bound improves the
theoretical upper bound on the number of time-maximal $\Delta$-cliques shown by Himmel et
al.~\cite{himmel2017adapting} from~$|V| \cdot 3^{d/3} \cdot 2^{d+1} \cdot \omega$ to~$|V| \cdot 2^{d+1} \cdot \min\{|E|, \omega\}$.
\begin{proposition}
\label{lem:numkplex}
Let~$G=(V,E,\omega)$ be a temporal graph with~$\Delta$-slice degeneracy~$d$. Then,
the number of time-maximal $\Delta$-$k$-plexes in~$G$ is at most~$|V| \cdot \binom{|V|}{k} \cdot  2^{d + k } \cdot \min\{|E|, \omega\}$.
\end{proposition}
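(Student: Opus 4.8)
The plan is to bound the number of time-maximal \dkps{} by a charging argument: I assign each such plex to the left endpoint~$a$ of its time interval together with a single ``anchor'' vertex, and then control separately the vertex sets~$C$ that can arise for a fixed~$a$ and the number of distinct endpoints~$a$ that can occur. The starting observation is a size bound: every \dkp{}~$(C,I)$ satisfies $|C|\le d+k$. Indeed, fix any $i\in I$; the static slice $G_{\Delta_i}$ has degeneracy at most~$d$, so the subgraph it induces on~$C$ has a vertex of degree at most~$d$. On the other hand the condition $|\non{v}{i}\cap C|\le k$ forces every vertex of~$C$ to have at least $|C|-k$ neighbors inside~$C$ in~$G_{\Delta_i}$ (its own membership in~$\non{v}{i}$ supplies the extra~$+1$), so $|C|-k\le d$.

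First I would count, for a fixed left endpoint~$a$, the vertex sets~$C$ that admit a time-maximal interval starting at~$a$. Each such~$C$ is a $k$-plex in~$G_{\Delta_a}$, so I fix a degeneracy ordering of~$G_{\Delta_a}$ and let~$v$ be the first vertex of~$C$ in this ordering ($|V|$ choices for the anchor). Every other vertex of~$C$ comes after~$v$ and is either a neighbor of~$v$, hence one of the at most~$d$ forward-neighbors of~$v$ (at most~$2^d$ subsets), or one of the at most~$k-1$ non-neighbors of~$v$ inside~$C$, which may be arbitrary vertices of~$V$. A set of at most~$k-1$ such vertices can be chosen in at most $\binom{|V|}{k}\cdot 2^{k}$ ways, giving at most $|V|\cdot\binom{|V|}{k}\cdot 2^{d+k}$ candidate sets~$C$ per endpoint~$a$. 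Crucially, for a fixed~$C$ the set of feasible frames is a disjoint union of maximal intervals, so a prescribed left endpoint~$a$ pins down at most one time-maximal \dkp{} on~$C$; no further multiplicity is introduced.

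It remains to bound the number of distinct left endpoints, and here I would replace the trivial bound~$\omega$ by~$\min\{|E|,\omega\}$, which is precisely the improvement over Himmel et al. If $a>1$ is the start of a maximal feasible interval for some~$C$, then $(C,\Delta_{a-1})$ is infeasible while $(C,\Delta_a)$ is feasible, so some $v\in C$ strictly decreases its number of \dnns{} in~$C$ as the frame shifts from $\Delta_{a-1}=[a-1,a+\Delta-1]$ to $\Delta_a=[a,a+\Delta]$. These frames differ only in that $\Delta_a$ gains the time step $a+\Delta$, so the decrease forces a time-stamped edge at time exactly $a+\Delta$. Distinct endpoints therefore witness distinct time stamps, so there are at most $\min\{|E|,\omega\}$ of them (together with the single endpoint $a=1$). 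Multiplying the per-endpoint count by the number of endpoints yields the claimed bound $|V|\cdot\binom{|V|}{k}\cdot 2^{d+k}\cdot\min\{|E|,\omega\}$.

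The size bound and the degeneracy-ordering count are routine; the main obstacle is the last step, where obtaining $\min\{|E|,\omega\}$ rather than~$\omega$ requires arguing that every relevant left endpoint is witnessed by its own time-stamped edge, so that endpoints inject into distinct time stamps. A secondary bookkeeping point is that the at most $k-1$ free non-neighbors of~$v$ contribute $\sum_{j\le k-1}\binom{|V|}{j}$, which must be absorbed into $\binom{|V|}{k}\cdot 2^{k}$ (valid for $k\le|V|$); this is harmless but should be stated explicitly.
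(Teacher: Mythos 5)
Your proposal follows essentially the same counting scheme as the paper's proof: fix a frame (you: the leftmost frame of the lifetime; the paper: any frame the lifetime contains), take a degeneracy ordering of the static slice, anchor each admissible vertex set at its first vertex $v$ ($|V|$ choices), and observe that the rest of $C$ consists of a subset of the at most $d$ forward neighbors of $v$ plus at most $k-1$ arbitrary non-neighbors, giving the factor $\binom{|V|}{k}\cdot 2^{d+k}$, with each vertex set contributing at most one time-maximal \dkp{} per frame. Where you go beyond the paper is the factor $\min\{|E|,\omega\}$: the paper merely asserts that there are ``$\min\{|E|,\omega\}\cdot|V|$ possibilities,'' whereas your charging argument (a left endpoint $a>1$ forces some pair in $C$ to become adjacent exactly when the frame gains time step $a+\Delta$, so endpoints inject into time stamps carrying edges) actually justifies why $|E|$, and not only $\omega$, bounds the number of relevant frames; this is a genuine improvement in rigor. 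One blemish: the endpoint $a=1$ is not covered by your injection, so what you literally prove has $\min\{|E|,\omega\}+1$ in place of $\min\{|E|,\omega\}$; the parenthetical ``together with the single endpoint $a=1$'' glosses over this, and the corner case where every edge occurs at a time stamp of at least $\Delta+2$ shows the extra summand cannot simply be absorbed. This off-by-one is immaterial for how the proposition is used (the running-time bound in \Cref{thm:runtime} is asymptotic), and the stated constant is in any case fragile---for $E=\emptyset$ the right-hand side is $0$ while every nonempty $C$ with $|C|\le k$ forms a time-maximal \dkp{} over the whole lifetime---but you should state the bound you actually obtain rather than silently matching the claimed one.
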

\begin{proof}
The statement can be shown by a simple counting argument. For each $\Delta$-frame, we count how many time-maximal~$\Delta$-$k$-plexes have a lifetime that contains this $\Delta$-frame.
For a given $\Delta$-frame~$\Delta_i$, there exists a \emph{degeneracy ordering} of~$G_{\Delta_i}=(V, E_{\Delta_i})$, where~$E_{\Delta_i}=\{e\mid (e,t)\in E\land t\in \Delta_i\}$. The degeneracy ordering of a graph is a linear ordering of its vertices with the property that for each vertex~$v$ at most~$d$ of its neighbors occur at a later position. 

Now for each $\Delta$-frame~$\Delta_i$ and each vertex $v$ (that is, $\min\{|E|, \omega\} \cdot |V|$ possibilities) in the degeneracy ordering of graph $G_{\Delta_i}=(V, E_{\Delta_i})$, we count the number of $k$-plexes of~$G_{\Delta_i}$ which only contain~$v$ and vertices that appear at a later position in the ordering. By definition, $v$ has at most~$d$ neighbors that appear later in the ordering and~$v$ can be connected to $k-1$ other vertices. For the latter, we consider all vertices, yielding the factor $\binom{|V|}{k}$ in the upper bound. Each subset of these~$k+d$ vertices in $G_{\Delta_i}$ can, together with~$v$, potentially be the vertex set of several~$\Delta$-$k$-plexes. This yields our upper bound since each of these vertex sets can potentially form at most one time-maximal $\Delta$-$k$-plex with a lifetime that contains~$\Delta$-frame~$\Delta_i$. The number of such subsets is at most~$2^{d + k }$. 
\end{proof}

We now combine the previous results to upper-bound the running time of \BKB.
For comparison,~the $\Delta$-clique algorithm of \citet{himmel2017adapting} takes~$O(2^d \cdot \min\{|E|^2, \omega^2\}  \cdot |V|)$~time.
\begin{theorem}
\label{thm:runtime}
\BKB~runs in $\mathcal{O}(\binom{|V|}{k} \cdot  2^{d + k
}\cdot\min\{|E|^2,\omega^2\} \cdot|V|^3)$ time, where $d$ is the $\Delta$-slice
degeneracy of the input graph.
\end{theorem}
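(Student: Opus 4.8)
The plan is to combine the four ingredients already established in this section into a single running-time bound by multiplying the per-call cost by the number of recursive calls. The overall structure of \BKB{} is a recursion tree whose nodes are individual calls, each associated with a time-maximal \dkp{} set $R=(C,\mathcal{I})$; by \Cref{lem:numcall}, distinct time-maximal \dkps{} correspond to distinct recursive calls (at most one call per pair $(C,I)$ with $I \in \mathcal{I}$), so the total number of recursive calls is bounded by the number of time-maximal \dkps{}, which \Cref{lem:numkplex} bounds by $|V| \cdot \binom{|V|}{k} \cdot 2^{d+k} \cdot \min\{|E|,\omega\}$. The total running time is then the number of calls times the work done per call, plus the one-time precomputation cost from \Cref{lemma-building-nnh}.

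First I would identify the dominant per-call cost. Within each recursive call, the loop over $I \in \mathcal{I}$ performs the maximality check and solution output, but the expensive work is the for-loop over $(v,\mathcal{I}_v) \in P$, where for each such $v$ the algorithm invokes \updatepool{} and \update. By \Cref{theorem:addupdate}, each such invocation costs $\mathcal{O}(\min\{|E|,\omega\} \cdot |V|^2)$. Since $|P| = \mathcal{O}(|V|)$, the work inside a single call (excluding the cost charged to the child calls themselves) is $\mathcal{O}(|V|^3 \cdot \min\{|E|,\omega\})$. The subtle accounting point is that the for-loop iterates over all of $P$, but each iteration either spawns a child recursive call or not; to avoid double-counting I would charge the \updatepool{}/\update{} cost of the iteration producing a given child to that child's node, so that each node pays $\mathcal{O}(|V|^3 \cdot \min\{|E|,\omega\})$ for its own setup. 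This keeps the per-node cost at $\mathcal{O}(|V|^3 \cdot \min\{|E|,\omega\})$.

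Multiplying the number of nodes by the per-node cost gives
\[
|V| \cdot \binom{|V|}{k} \cdot 2^{d+k} \cdot \min\{|E|,\omega\} \;\cdot\; \mathcal{O}\!\big(|V|^3 \cdot \min\{|E|,\omega\}\big),
\]
which collapses to $\mathcal{O}\big(\binom{|V|}{k} \cdot 2^{d+k} \cdot \min\{|E|^2,\omega^2\} \cdot |V|^4\big)$. I would then note that the precomputation of the \dnnh{} by \Cref{lemma-building-nnh} costs only $\mathcal{O}(|V|^2 + |E|)$ and is therefore dominated by the recursion cost, so it does not affect the final bound.

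The main obstacle is reconciling the naive product $|V|^4$ with the claimed $|V|^3$ in the statement: the theorem asserts a $|V|^3$ factor, one power of $|V|$ less than the crude multiplication yields. The resolution I expect to exploit is a sharper amortized charging argument: the factor $|V|$ in the node count from \Cref{lem:numkplex} comes from summing over the ``starting vertex'' $v$ of the degeneracy ordering, and this $|V|$ overlaps with one of the $|V|$ factors in the per-call cost — essentially, the $\mathcal{O}(|P|) = \mathcal{O}(|V|)$ iterations of the for-loop are precisely what generate the children, so the work of iterating over $P$ should be amortized across the children rather than multiplied by the node count independently. Making this precise is the crux: I would argue that the total number of (parent, child) edges in the recursion tree equals the number of non-root nodes, and that the per-iteration cost of $\mathcal{O}(|V|^2 \cdot \min\{|E|,\omega\})$ charged over all iterations—rather than $\mathcal{O}(|V|^3 \cdot \min\{|E|,\omega\})$ per node—yields the extra saving of one $|V|$ factor, landing on the stated $\mathcal{O}\big(\binom{|V|}{k} \cdot 2^{d+k} \cdot \min\{|E|^2,\omega^2\} \cdot |V|^3\big)$.
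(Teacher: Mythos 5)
Your proposal is correct and takes essentially the same route as the paper: it bounds the number of recursive calls via \Cref{Lemma:PX}, \Cref{lem:numcall}, and \Cref{lem:numkplex}, and then charges one \updatepool{} plus two \update{} invocations (costing $\mathcal{O}(\min\{|E|,\omega\}\cdot|V|^2)$ by \Cref{theorem:addupdate}) to the child call that each for-loop iteration spawns, which is precisely how the paper avoids the spurious extra factor of $|V|$ you initially encountered. The amortization over recursion-tree edges in your final paragraph is exactly the paper's accounting (``there is exactly one call to \updatepool{} and two calls to \update{} in each recursive call''), so your resolution is the intended argument rather than an additional idea.
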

\begin{proof}
First, recall that by \cref{lemma-building-nnh} the \dnnh~can be
precomputed once at the start of the algorithm in $\mathcal{O}(|V|^2 + |E|)$ time assuming
that the edges are sorted. 
If they are unsorted, then we can sort them in $\mathcal{O}(|E|\cdot\log |E|)$ time. 

We first give an upper bound on the number of recursive calls in an execution of \BKB. By \Cref{Lemma:PX}(\ref{lemma:prop1}) we know that in each recursive call of~\BKB{}, we have that~$(C,I)$ is a time-maximal \dkp\ for all $I \in \mathcal I$. \Cref{lem:numcall} tells us that the time-maximal \dkps\ in all recursive calls are distinct. Finally, \Cref{lem:numkplex} gives us an upper bound on the number of distinct time-maximal \dkps. We can conclude that for an
execution of \BKB~the number of recursive calls is bounded by~$\mathcal{O}(\binom{|V|}{k} \cdot  2^{d + k
}\cdot\min\{|E|,\omega\} \cdot |V|)$. 

Now, we analyze the running time
of each recursive call. For this, notice that there is exactly one call to \updatepool{} and two calls to
\update{} in each recursive call of~\BKB{}. Hence, the
running time of the for-loop is dominated by the complexity of
\updatepool~and~\update~which run in $\mathcal{O}(\min\{|E|,\omega\} \cdot|V|^2)$
time by~\Cref{theorem:addupdate}. Concluding the proof, there are
$\mathcal{O}(\binom{|V|}{k} \cdot  2^{d + k
}\cdot\min\{|E|,\omega\} \cdot |V|)$ recursive calls and each of
these recursive calls runs in~$\mathcal{O}(\min\{|E|,\omega\} \cdot|V|^2)$ time. This yields a total running time of~$\mathcal{O}(\binom{|V|}{k} \cdot  2^{d + k
}\cdot\min\{|E|^2,\omega^2\} \cdot|V|^3)$ for \BKB.
\end{proof}

\subsection{Heuristic Improvements}
We propose two heuristics to improve the running time of \BKB{}. In~\cref{sec:experiment}, we demonstrate the effectiveness of both heuristics on real-world data sets. 
\subsubsection{Pivoting}
A particular feature to improve  the running time of the classic 
\textsc{BronKerbosch} algorithm is the use of \emph{pivoting} \cite{ELS13}, a procedure to reduce the number of its recursive calls. Himmel et al.~\cite{himmel2017adapting} showed how to 
transfer this to the temporal setting. We use pivoting as a ``black box'' here and refer to Himmel et al.~\cite{himmel2017adapting} for more details on pivoting. They tested several alternatives and we chose to implement the method they found most effective (based on empirical results).

Let $R=(C, \mathcal I)$ be the currently considered \dkp~and let~$P$ be the set of candidates.
The rough idea of our pivoting method is as follows:
\begin{enumerate}
\item Select a pivot element  that
\begin{enumerate}
	\item is connected to each~$c\in C$ over the whole lifetime of the pivot element, and
	\item out of all of these has the largest set~$W \subseteq P$ of candidates to which it is connected over the candidates lifetime.
\end{enumerate}
\item Ignore all candidates in~$W$ in the current recursive call (but not in future calls).
\end{enumerate} 
(The same pivoting method was called ``1G'' in previous work by Himmel et al.~\cite{himmel2017adapting} and it was shown to be most effective for $\Delta$-cliques in their experiments.) 
Ignoring all candidates in~$W$ does not influence the correctness of the algorithm as each \dkp{} which is obtained by only adding candidates in~$W$ to~$R$ cannot be maximal as one can always add the pivot element to it without needing to reduce the lifetime.

\subsubsection{Connectedness Criterion}

A big obstacle when efficiently enumerating all maximal \dkps~is the solution size. Each
combination of $k$ vertices induces a \dkp~over the whole lifetime. Hence, there
are $\binom{|V|}{k}$ trivial solutions. These solutions contain no information
but for large~$k$ increase the computational cost immensely. We therefore want to
exclude these trivial solutions and terminate in the recursion before
computing those. A typical condition on $k$-plexes in enumeration algorithms for static graphs is that the $k$-plexes should be connected~\cite{conte2017fast,berlowitz2015efficient} or even have small diameter~\cite{conte2018d2k}. These constraints can indirectly be satisfied by additionally requiring the $k$-plexes to have a certain minimum size.

To this end, we modify \BKB{} to only enumerate~\dkp
es with a minimum number of~$2k+1$ vertices. These~\dkp es are clearly connected
in each $\Delta$-frame since every subgraph of order $x$, where every vertex has degree at least~$x/2$, is connected. This allows us to use a simple heuristic. During
the for-loop in \cref{alg:bronkerdelta} over all candidates in~$P$, we select only those
vertex-interval-set pairs in the candidate set~$P$ which have a connection to
the current~\dkp~$R$ in at least one $\Delta$-frame during the lifetime of $R$.
If there is no such vertex-interval-set pair in~$P$, then we can terminate this
recursive call early since it will not yield a connected \dkp~of order~$2k+1$.
To show that using this heuristic does not let us miss any connected \dkp{} is straightforward to verify, so we omit a formal proof.

\section{Experimental Analysis}
\label{sec:experiment}
In this section, we analyze the running time of \BKB~on several real-world temporal graphs and investigate the effect of different values for $k$ and~$\Delta$ on the running time.
We study the impact of the connectedness criterion and/or pivoting on the practical performance of our new algorithm and compare it to a state-of-the-art algorithm by \citet{ViardML18} and a previous version~\cite{himmel2017adapting}, both for enumerating temporal cliques.

\subsection{Setup and Statistics}
We implemented\footnote{The code of our implementation is freely available under GNU general public license version 3 at http://fpt.akt.tu-berlin.de/temporalkplex/}
\BKB{} in Python~2.7.12 and carried out experiments on an Intel Xeon
E5-1620 computer with four cores clocked at 3.6\,GHz and with 64\,GB RAM. We
did not utilize the parallel-processing capabilities.
The operating system was Debian GNU/Linux~6.0. 
We compare \BKB{} with the algorithms by \citet{himmel2017adapting} and \citet{ViardML18} 
which were implemented in Python~2.7.11. 
\begin{table}[t!]
\footnotesize
  \centering
  \caption{Statistics for the data sets used in our experiments. The lifetime $\omega$ of a graph is the difference between the largest and smallest time stamp on an edge in the graph. The resolution $r$ indicates how often edges were measured. As an example, a resolution of~$86400$ seconds (one day) indicates that the graph contains edges in only one out of~$86400$ consecutive time steps.}
  \pgfplotstabletypeset[
   col sep=comma,
   columns={Data,Vertices, Edges,Resolution (s), Lifetime (s)},
   columns/Data/.style={column type=l,string type},
   columns/Vertices/.style={column type=r, column name={\# Vertices $n$}},
   columns/Edges/.style={column type=r,int detect, column name={\# Edges $m$}},
   columns/Lifetime (s)/.style={column type=r,int detect, column name={Lifetime $\omega$ (in s)}},
   columns/Resolution (s)/.style={column type=r,int detect, column name={Resolution $r$ (in s)}},
   every head row/.style={before row=\toprule , after row=\midrule},
   every last row/.style={after row=\bottomrule},
   ]{GraphData.csv}
  \label{tab:stats1}
\end{table}
For the sake of comparability we tested our implementation on the same freely available data sets as \citet{himmel2017adapting} and \citet{ViardML18} as well as four additional traffic network data sets~\cite{williams2016spatio,dryad_3p27r}:
\begin{compactitem}
\item a US domestic flights network (``flights''~\cite{williams2016spatio,dryad_3p27r}),
\item networks of the London Underground, Paris Metro, and New York Subway (``london'', ``paris'', ``ny''~\cite{williams2016spatio,dryad_3p27r}),
\item physical-proximity networks\footnote{Available at http://www.sociopatterns.org/datasets/ .} between 
\begin{compactitem}
\item high school students (``highschool-2011'', ``highschool-2012'', ``highschool-2013''~\cite{gemmetto2014mitigation,stehle2011high,fournet2014contact}), 
\item children and teachers in a primary school (``primaryschooll''~\cite{stehle2011high}),
\item patients and health-care workers (``hospital-ward''~\cite{vanhems2013estimating}), 
\item attendees of the Infectious SocioPatterns event (``infectious''~\cite{isella2011s}),
\item conference attendees of ACM Hypertext 2009 (``hypertext''~\cite{isella2011s}), 
\end{compactitem}
\item an email communication network of the 2016 Democratic National Committee email leak (``dnc''~\cite{KONECT17}),
\item an email communication network (``karlsruhe''~\cite{EMT2011}), 
\item a social-network communication network (``facebook-like''~\cite{opsahl2009clustering}), and  
\item an internet router communication network (``as-733''~\cite{leskovec2005graphs}).
\end{compactitem}
We summarize some important statistics about the different data sets in \cref{tab:stats1,tab:stats2}.
\begin{table}[t]
\footnotesize
  \centering
  \caption{Static degeneracy and \dsd{} ($\Delta$-values are scaled by~$\omega /5m$).}
  \pgfplotstabletypeset[
   col sep=comma,
   columns={Data,Classical Degeneracy,Degeneracy0,Degeneracy3,Degeneracy5,Degeneracy7},
   columns/Data/.style={column type=l,string type},
   columns/Classical Degeneracy/.style={column name={Static}},
   columns/Degeneracy0/.style={column name={$\Delta=0$}},
   columns/Degeneracy3/.style={column name={$\sim 5^3$}},
   columns/Degeneracy5/.style={column name={$\sim 5^5$}},
   columns/Degeneracy7/.style={column name={$\sim 5^7$}},
   every head row/.style={before row=\toprule, after row=\midrule},
   every last row/.style={after row=\bottomrule},
   ]{GraphData.csv}
  \label{tab:stats2}
\end{table}
We used the same~$\Delta$-values\footnote{In order to limit the influence of time scales in the data and to make running times comparable between instances, the $\Delta$-values are chosen by fixing some exponentially increasing reference values and then scaling the values by~$\omega /5m$~\cite[Section~5.1]{himmel2017adapting}.} and the same time limit of one hour as \citet{himmel2017adapting}.
We present solutions found by our implementation for~$k \leq 3$ as for higher values of~$k$ the time limit of one hour was reached in all instances.
For the case of connected~$\Delta$-$k$-plexes we were able to solve instances up to~${k = 5}$ within the time limit of one hour.
\subsection{Pivoting and Connectedness Criterion}
In this part we analyze the effectiveness of pivoting and the use of our connectedness criterion.
We start with pivoting.
\Cref{fig:pivot} illustrates the use of pivoting on all considered data sets.
As one can see, the use of pivoting only improves the running time for the case of large~$\Delta$-values.
This intuitively makes sense as with large~$\Delta$-values one has more elements to choose from and hence carefully picking the best one improves the performance whereas if there are only few alternatives, then the overhead for computing a good pivot element outweighs its gain.
Additionally, there are two temporal graphs (``as-733'' and ``flights'', they correspond to the outliers in \Cref{fig:pivot}) for which pivoting seems to be especially effective.
\begin{figure}[t!]
	\centering
	\def\maxValueRun{3600}
	\def\minValueRun{1}
	\begin{tikzpicture}[scale=.9]
		\begin{loglogaxis}[
					width=0.9\textwidth,
					height=0.5\textwidth,
					ylabel={with \textbf{Pivoting}},
					xlabel={running time of \BKB{}},
					legend cell align=left,
					legend pos=north west
					,xmax=\maxValueRun
					,xmin=\minValueRun
					,ymax=\maxValueRun
					,ymin=\minValueRun,
					colorbar,
					colorbar style={
                		scale,
                		ylabel=$\Delta$-value,
                		ylabel style={
							yshift=-7em
						},
						ytick={1,2,...,11},
						yticklabel={$5^{\pgfmathprintnumber{\tick}}$},
					}
			]
 			\addplot[scatter src=explicit,scatter,only marks,mark=o, thick] table[col sep=comma,y={runtime_classic_pivo2}, x={runtime_classic2}, meta={delta_number}] {resultsAll.csv};
			\addplot[color=black,domain=\minValueRun:\maxValueRun,samples=4] {x};
			\addplot[dashed,color=black!75,domain=\minValueRun:\maxValueRun,samples=4] {5*x};
			\addplot[dashed,color=black!75,domain=\minValueRun:\maxValueRun,samples=4] {0.2*x};
			\addplot[dotted,color=black,domain=\minValueRun:\maxValueRun,samples=4] {25*x};
			\addplot[dotted,color=black,domain=\minValueRun:\maxValueRun,samples=4] {0.04*x};
		\end{loglogaxis}
	\end{tikzpicture}
	\caption{Comparison of the running times (in seconds) of \BKB{} with and without pivoting for different~$k$- and~$\Delta$-values (scaled by~$\omega/5m$) on all considered data sets.
	The~$x$-coordinate of a data point represents the running time without pivoting, its~$y$-coordinate represents the running time with pivoting, and its color encodes the~$\Delta$-value that was used. Data points on the upper or right border of the plot signal that the time limit of one hour was exceeded by \BKB{} with or without pivoting, respectively.
	We omitted the~$k$-value in this diagram as we could not measure any effect of the~$k$-value on the running time difference between the two runs.
	\Cref{fig:pivotk} (Appendix) illustrates the correlation between~$k$ and the running-time difference.}
	\label{fig:pivot}
~\\
	\centering
	\def\maxValueRun{5000000}
	\def\minValueRun{1}
	\begin{tikzpicture}[scale=.9]
		\begin{loglogaxis}[
 					width=0.9\textwidth,
 					height=0.5\textwidth,
					xlabel={\# \textbf{all} $\Delta$-$k$-plexes},
					ylabel={\# \textbf{connected} $\Delta$-$k$-plexes},
					legend cell align=left,
					legend pos=north west,
					xmax=\maxValueRun,
					xmin=5000,
					ymax=\maxValueRun,
					ymin=\minValueRun,
					ymode=log,
					colorbar,
					colorbar style={
                		scale,
                		ylabel={$k$},
                		ylabel style={
							yshift=-7em
						},
						ytick={1,2,...,6},
						yticklabel={$\pgfmathprintnumber{\tick}$},
				}
			]
 			\addplot[scatter src=explicit,scatter,only marks,mark=o, thick] table[col sep=comma,y={num_kplexes_conn}, x={num_kplexes_classic}, meta={k}] {resultsAll.csv};
			\addplot[color=black,domain=\minValueRun:\maxValueRun,samples=4] {x};
			\addplot[dashed,color=black!75,domain=\minValueRun:\maxValueRun,samples=4] {5*x};
			\addplot[dashed,color=black!75,domain=\minValueRun:\maxValueRun,samples=4] {0.2*x};
			\addplot[dotted,color=black,domain=\minValueRun:\maxValueRun,samples=4] {25*x};
			\addplot[dotted,color=black,domain=\minValueRun:\maxValueRun,samples=4] {0.04*x};
			\addplot[loosely dotted,color=black,domain=\minValueRun:\maxValueRun,samples=4] {125*x};
			\addplot[loosely dotted,color=black,domain=\minValueRun:\maxValueRun,samples=4] {0.008*x};
		\end{loglogaxis}
	\end{tikzpicture}%
	\caption{Comparison of the number of~maximal $\Delta$-$k$-plexes and connected~maximal $\Delta$-$k$-plexes on all considered data sets for different values of~$k$.
	We only included instances where both numbers could be computed within one hour.}
	\label{fig:numberkplex}
\end{figure}
\toappendix{
	\begin{figure}[!hp]
		\centering
		\def\maxValueRun{3600}
		\def\minValueRun{1}
		\begin{tikzpicture}[scale=1]
			\begin{loglogaxis}[
						width=0.9\textwidth,
						height=0.5\textwidth,
						ylabel={with \textbf{Pivoting}},
						xlabel={running time of \BKB{}},
						legend cell align=left,
						legend pos=north west
						,xmax=\maxValueRun
						,xmin=\minValueRun
						,ymax=\maxValueRun
						,ymin=\minValueRun,
						colorbar,
						colorbar style={
	                		scale,
	                		ylabel=$k$,
	                		ylabel style={
								yshift=-7em
							},
							ytick={0,1,2,3,4},
							yticklabel={$\pgfmathprintnumber{\tick}$},
						}
				]
	 			\addplot[scatter src=explicit,scatter,only marks,mark=o, thick] table[col sep=comma,y={runtime_classic_pivo2}, x={runtime_classic2}, meta={k}] {resultsAll.csv};
				\addplot[color=black,domain=\minValueRun:\maxValueRun,samples=4] {x};
				\addplot[dashed,color=black!75,domain=\minValueRun:\maxValueRun,samples=4] {5*x};
				\addplot[dashed,color=black!75,domain=\minValueRun:\maxValueRun,samples=4] {0.2*x};
				\addplot[dotted,color=black,domain=\minValueRun:\maxValueRun,samples=4] {25*x};
				\addplot[dotted,color=black,domain=\minValueRun:\maxValueRun,samples=4] {0.04*x};
			\end{loglogaxis}
		\end{tikzpicture}
		\caption{Comparison of the running times (in seconds) of \BKB{} with and without pivoting for different~$k$- and~$\Delta$-values (scaled by~$\omega/5m$) on all considered data sets.
		Here we include the~$k$-value to show that it plays no significant rule for pivoting.}
		\label{fig:pivotk}
	\end{figure}
}

We next investigate the effects of using a connectedness criterion.
\Cref{fig:numberkplex} shows the difference in number of~$\Delta$-$k$-plexes and connected~$\Delta$-$k$-plexes (of order at least~${2k+1}$).
A similar plot correlating the difference in running time with the~$\Delta$-value can be found in the Appendix (\cref{fig:kplexDelta}).
\toappendix{
\begin{figure}[!hp]
	\centering
	\def\maxValueRun{5000000}
	\def\minValueRun{1}
	\begin{tikzpicture}[scale=1]
		\begin{loglogaxis}[
 					width=0.9\textwidth,
 					height=0.5\textwidth,
					xlabel={\# \textbf{all} $\Delta$-$k$-plexes},
					ylabel={\# \textbf{connected} $\Delta$-$k$-plexes},
					legend cell align=left,
					legend pos=north west,
					xmax=\maxValueRun,
					xmin=5000,
					ymax=\maxValueRun,
					ymin=\minValueRun,
					ymode=log,
					colorbar,
					colorbar style={
                		scale,
                		ylabel=$\Delta$-value,
                		ylabel style={
							yshift=-7em
						},
						ytick={1,2,...,8},
						yticklabel={$5^{\pgfmathprintnumber{\tick}}$},
				}
			]
 			\addplot[scatter src=explicit,scatter,only marks,mark=o, thick] table[col sep=comma,y={num_kplexes_conn}, x={num_kplexes_classic}, meta={delta_number}] {resultsAll.csv};
			\addplot[color=black,domain=\minValueRun:\maxValueRun,samples=4] {x};
			\addplot[dashed,color=black!75,domain=\minValueRun:\maxValueRun,samples=4] {5*x};
			\addplot[dashed,color=black!75,domain=\minValueRun:\maxValueRun,samples=4] {0.2*x};
			\addplot[dotted,color=black,domain=\minValueRun:\maxValueRun,samples=4] {25*x};
			\addplot[dotted,color=black,domain=\minValueRun:\maxValueRun,samples=4] {0.04*x};
			\addplot[loosely dotted,color=black,domain=\minValueRun:\maxValueRun,samples=4] {125*x};
			\addplot[loosely dotted,color=black,domain=\minValueRun:\maxValueRun,samples=4] {0.008*x};
		\end{loglogaxis}
	\end{tikzpicture}%
	\caption{Comparison of the number of~maximal $\Delta$-$k$-plexes and connected maximal~$\Delta$-$k$-plexes on all considered data sets for different values of~$\Delta$ (scaled by~$\omega / 5m$).}
	\label{fig:kplexDelta}
\end{figure}
}%
As expected, the number of connected~\dkps{} is significantly smaller, especially for larger $k$-values as this excludes many trivial \dkps{}, while the~$\Delta$-value seems to play a negligible role.

\Cref{fig:runtimeconnected} shows the running-time difference of \BKB{} with and without connectedness criterion.
\begin{figure}[!t]
	\centering
	\def\maxValueRun{3600}
	\def\minValueRun{1}
	\begin{tikzpicture}[scale=.9]
		\begin{loglogaxis}[
					width=0.9\textwidth,
					height=0.5\textwidth,
					xlabel={running time of \BKB{}},
					ylabel={with \textbf{Connectedness Criterion}},
					legend cell align=left,
					legend pos=north west
					,xmax=\maxValueRun
					,xmin=1
					,ymax=\maxValueRun
					,ymin=\minValueRun,
					colorbar,
					colorbar style={
                		scale,
                		ylabel=$k$,
                		ylabel style={
							yshift=-7em
						},
						ytick={1,2,...,11},
						yticklabel={$\pgfmathprintnumber{\tick}$},
					}
			]
 			\addplot[scatter src=explicit,scatter,only marks,mark=o, thick] table[col sep=comma,y={runtime_conn2}, x={runtime_classic2}, meta={k}] {resultsAll.csv};
			\addplot[color=black,domain=\minValueRun:\maxValueRun,samples=4] {x};
			\addplot[dashed,color=black!75,domain=\minValueRun:\maxValueRun,samples=4] {5*x};
			\addplot[dashed,color=black!75,domain=\minValueRun:\maxValueRun,samples=4] {0.2*x};
			\addplot[dotted,color=black,domain=\minValueRun:\maxValueRun,samples=4] {25*x};
			\addplot[dotted,color=black,domain=\minValueRun:\maxValueRun,samples=4] {0.04*x};
		\end{loglogaxis}
	\end{tikzpicture}
		\caption{Comparison of the running times (in seconds) of \BKB{} with and without the connectedness criterion for different~$k$-values.
		We omitted the~$\Delta$-value here as we could not measure any correlation between the difference in running time and the~$\Delta$-value.
		An according plot can be found in the Appendix (\cref{fig:connectedDelta}) together with similar plots and correlations to different parameters (\cref{fig:connectedVertices,fig:connectedDSD}).}
		\label{fig:runtimeconnected}
~\\
~\\
	\centering
	\def\maxValueRun{3600}
	\def\minValueRun{1}
	\begin{tikzpicture}[scale=.9]
		\begin{loglogaxis}[
					width=0.9\textwidth,
					height=0.5\textwidth,
					ylabel={with \textbf{Pivoting}},
					xlabel={running time of \BKB{} with \textbf{Connectedness Criterion}},
					legend cell align=left,
					legend pos=north west
					,xmax=\maxValueRun
					,xmin=\minValueRun
					,ymax=\maxValueRun
					,ymin=\minValueRun,
					colorbar,
					colorbar style={
                		scale,
                		ylabel=$\Delta$-value,
                		ylabel style={
							yshift=-7em
						},
						ytick={1,2,...,11},
						yticklabel={$5^{\pgfmathprintnumber{\tick}}$},
					}
			]
 			\addplot[scatter src=explicit,scatter,only marks,mark=o, thick] table[col sep=comma,y={runtime_conn_pivo2}, x={runtime_conn2}, meta={delta_number}] {resultsAll.csv};
			\addplot[color=black,domain=\minValueRun:\maxValueRun,samples=4] {x};
			\addplot[dashed,color=black!75,domain=\minValueRun:\maxValueRun,samples=4] {5*x};
			\addplot[dashed,color=black!75,domain=\minValueRun:\maxValueRun,samples=4] {0.2*x};
			\addplot[dotted,color=black,domain=\minValueRun:\maxValueRun,samples=4] {25*x};
			\addplot[dotted,color=black,domain=\minValueRun:\maxValueRun,samples=4] {0.04*x};
		\end{loglogaxis}
	\end{tikzpicture}
	\caption{Comparison of the running times (in seconds) of \BKB{} with our connectedness criterion with and without pivoting for different~$k$- and~$\Delta$-values (scaled by~$\omega /5m$) on all considered data sets.
	We omitted the~$k$-value in this illustration, \cref{fig:pivotk} (Appendix) illustrates the correlation of the running-time difference with~$k$.}
	\label{fig:connectedPivot}
\end{figure}
One can observe that similar to the number of~\dkps{} the connectedness criterion improves the running time especially for large~$k$-values while all other measured parameters seem to be mostly irrelevant.
\toappendix{
	\begin{figure}[!hp]
		\centering
		\def\maxValueRun{3600}
		\def\minValueRun{1}
		\begin{tikzpicture}[scale=1]
			\begin{loglogaxis}[
						width=0.9\textwidth,
						height=0.5\textwidth,
						xlabel={running time of \BKB{}},
						ylabel={with \textbf{Connectedness Criterion }},
						legend cell align=left,
						legend pos=north west
						,xmax=\maxValueRun
						,xmin=1
						,ymax=\maxValueRun
						,ymin=\minValueRun,
						colorbar,
						colorbar style={
	                		scale,
	                		ylabel=$\Delta$-value,
	                		ylabel style={
								yshift=-7em
							},
							ytick={1,2,...,11},
							yticklabel={$5^{\pgfmathprintnumber{\tick}}$},
						}
				]
	 			\addplot[scatter src=explicit,scatter,only marks,mark=o, thick] table[col sep=comma,y={runtime_conn2}, x={runtime_classic2}, meta={delta_number}] {resultsAll.csv};
				\addplot[color=black,domain=\minValueRun:\maxValueRun,samples=4] {x};
				\addplot[dashed,color=black!75,domain=\minValueRun:\maxValueRun,samples=4] {5*x};
				\addplot[dashed,color=black!75,domain=\minValueRun:\maxValueRun,samples=4] {0.2*x};
				\addplot[dotted,color=black,domain=\minValueRun:\maxValueRun,samples=4] {25*x};
				\addplot[dotted,color=black,domain=\minValueRun:\maxValueRun,samples=4] {0.04*x};
			\end{loglogaxis}
		\end{tikzpicture}
			\caption{Comparison of the running times (in seconds) of \BKB{} with and without the connectedness criterion for different~$\Delta$-values (scaled by $\omega /5m$).}
			\label{fig:connectedDelta}
	\end{figure}
	
	\begin{figure}[!hp]
	\centering
	\def\maxValueRun{3600}
	\def\minValueRun{1}
	\begin{tikzpicture}[scale=1]
		\begin{loglogaxis}[
					width=0.9\textwidth,
					height=0.5\textwidth,
					ylabel={with \textbf{Connectedness Criterion}},
					xlabel={running time of \BKB{}},
					legend cell align=left,
					legend pos=north west
					,xmax=\maxValueRun
					,xmin=1
					,ymax=\maxValueRun
					,ymin=\minValueRun,
					colorbar,
					colorbar style={
						scaled ticks= true,
                		ylabel=\# vertices,
                		ylabel style={
							yshift=-7em
						},
						yticklabel={$10^{\pgfmathprintnumber{\tick}}$},
					}
			]
 			\addplot[scatter src=explicit,scatter,only marks,mark=o, thick] table[col sep=comma,y={runtime_conn2}, x={runtime_classic2}, meta expr=lg10(\thisrow{vertices}) ] {resultsAll.csv};
			\addplot[color=black,domain=\minValueRun:\maxValueRun,samples=4] {x};
			\addplot[dashed,color=black!75,domain=\minValueRun:\maxValueRun,samples=4] {5*x};
			\addplot[dashed,color=black!75,domain=\minValueRun:\maxValueRun,samples=4] {0.2*x};
			\addplot[dotted,color=black,domain=\minValueRun:\maxValueRun,samples=4] {25*x};
			\addplot[dotted,color=black,domain=\minValueRun:\maxValueRun,samples=4] {0.04*x};
		\end{loglogaxis}
	\end{tikzpicture}
	\caption{Comparison of the running times (in seconds) of \BKB{} with and without the connectedness criterion in correlation with the number of vertices.}
	\label{fig:connectedVertices}
	\end{figure}
	\begin{figure}
	\centering
	\def\maxValueRun{3600}
	\def\minValueRun{1}
	\begin{tikzpicture}[scale=1]
		\begin{loglogaxis}[
					width=0.9\textwidth,
					height=0.5\textwidth,
					ylabel={with \textbf{Connectedness Criterion}},
					xlabel={running time of \BKB{}},
					legend cell align=left,
					legend pos=north west
					,xmax=\maxValueRun
					,xmin=1
					,ymax=\maxValueRun
					,ymin=\minValueRun,
					colorbar,
					colorbar style={
						scaled ticks= true,
                		ylabel=$\Delta$-slice degeneracy,
                		ylabel style={
							yshift=-7em
						},
						yticklabel={${\pgfmathprintnumber{\tick}}$},
					}
			]
 			\addplot[scatter src=explicit,scatter,only marks,mark=o, thick] table[col sep=comma,y={runtime_conn2}, x={runtime_classic2}, meta={degeneracy} ] {resultsAll.csv};
			\addplot[color=black,domain=\minValueRun:\maxValueRun,samples=4] {x};
			\addplot[dashed,color=black!75,domain=\minValueRun:\maxValueRun,samples=4] {5*x};
			\addplot[dashed,color=black!75,domain=\minValueRun:\maxValueRun,samples=4] {0.2*x};
			\addplot[dotted,color=black,domain=\minValueRun:\maxValueRun,samples=4] {25*x};
			\addplot[dotted,color=black,domain=\minValueRun:\maxValueRun,samples=4] {0.04*x};
		\end{loglogaxis}
	\end{tikzpicture}
	\caption{Comparison of the running times (in seconds) of \BKB{} with and without the connectedness criterion in correlation with the $\Delta$-slice degeneracy.}
	\label{fig:connectedDSD}
\end{figure}
}

It remains to analyze how the two approaches work together.
\Cref{fig:connectedPivot} shows the running time difference of \BKB{} with the connectedness criterion with and without pivoting.
\toappendix{
	\begin{figure}[!hp]
		\centering
		\def\maxValueRun{3600}
		\def\minValueRun{1}
		\begin{tikzpicture}[scale=1]
			\begin{loglogaxis}[
						width=0.9\textwidth,
						height=0.5\textwidth,
						ylabel={with \textbf{Pivoting}},
						xlabel={running time of \BKB{} with \textbf{Connectedness Criterion}},
						legend cell align=left,
						legend pos=north west
						,xmax=\maxValueRun
						,xmin=\minValueRun
						,ymax=\maxValueRun
						,ymin=\minValueRun,
						colorbar,
						colorbar style={
	                		scale,
	                		ylabel=k,
	                		ylabel style={
								yshift=-7em
							},
							ytick={1,2,...,11},
							yticklabel={$\pgfmathprintnumber{\tick}$},
						}
				]
	 			\addplot[scatter src=explicit,scatter,only marks,mark=o, thick] table[col sep=comma,y={runtime_conn_pivo2}, x={runtime_conn2}, meta={k}] {resultsAll.csv};
				\addplot[color=black,domain=\minValueRun:\maxValueRun,samples=4] {x};
				\addplot[dashed,color=black!75,domain=\minValueRun:\maxValueRun,samples=4] {5*x};
				\addplot[dashed,color=black!75,domain=\minValueRun:\maxValueRun,samples=4] {0.2*x};
				\addplot[dotted,color=black,domain=\minValueRun:\maxValueRun,samples=4] {25*x};
				\addplot[dotted,color=black,domain=\minValueRun:\maxValueRun,samples=4] {0.04*x};
			\end{loglogaxis}
		\end{tikzpicture}
		\caption{Comparison of the running times (in seconds) of \BKB{} with our connectedness criterion with and without pivoting for different~$k$-values.}
		\label{fig:connectedPivotk}
	\end{figure}
}%
We measured that pivoting improves the (relative) running time a little further when combined with the connectedness criterion in comparison to the classic setting (without connectedness criterion).

\subsection{Running Time}
In this subsection we compare the running time of \BKB{} with the algorithm of \citet{ViardML18} which we call~\viard.
Note that \viard{} enumerates~$\Delta$-cliques ($\Delta$-$1$-plexes) and hence we use~$k=1$ and no connectedness criterion for the comparison.
\cref{fig:VML} shows a comparison of the two algorithms and \cref{data_runningTime_Clique} shows the running times of the two algorithms and the algorithm by Himmel et al.~\cite{himmel2017adapting} for each data set.
	\begin{figure}[!t]
		\centering
		\def\maxValueRun{3600}
		\def\minValueRun{1}
		\begin{tikzpicture}[scale=1]
			\begin{loglogaxis}[
						width=0.9\textwidth,
						height=0.5\textwidth,
						ylabel={running time of \textbf{\viard{}}},
						xlabel={running time of \BKB{}},
						legend cell align=left,
						legend pos=north west
						,xmax=\maxValueRun
						,xmin=\minValueRun
						,ymax=\maxValueRun
						,ymin=\minValueRun,
						colorbar,
						colorbar style={
	                		scale,
	                		ylabel=\# vertices,
	                		ylabel style={
								yshift=-7em
							},
							yticklabel={$10^{\pgfmathprintnumber{\tick}}$},
						}
				]
	 			\addplot[scatter src=explicit,scatter,only marks,mark=o, thick] table[col sep=comma,y={running time_vlm22}, x={runtime_kplex2}, meta expr=lg10(\thisrow{vertices}] {resultsClique2.csv};
				\addplot[color=black,domain=\minValueRun:\maxValueRun,samples=4] {x};
				\addplot[dashed,color=black!75,domain=\minValueRun:\maxValueRun,samples=4] {5*x};
				\addplot[dashed,color=black!75,domain=\minValueRun:\maxValueRun,samples=4] {0.2*x};
				\addplot[dotted,color=black,domain=\minValueRun:\maxValueRun,samples=4] {25*x};
				\addplot[dotted,color=black,domain=\minValueRun:\maxValueRun,samples=4] {0.04*x};
			\end{loglogaxis}
		\end{tikzpicture}
		\caption{Comparison of the running times (in seconds) of \BKB{} and \viard{} for different~$\Delta$-values.}
		\label{fig:VML}
	\end{figure} 
	 \begin{table*}[t]
\footnotesize
  \caption{Running time comparison for finding all maximal $\Delta$-cliques in a temporal graph. The abbreviation~1Plex denotes the running time of \BKB, \viard{} refers to the running time of the algorithm by \citet{ViardML18}, and \bkd{} denotes the running time of the algorithm by \citet{himmel2017adapting}. Empty cells represent that the time limit of one hour was exceeded. The algorithms \viard{} and \bkd{} add $\Delta$ empty time steps at the beginning and at the end. Hence, the lifetime of the temporal graphs had to be adapted to ensure comparability to the other algorithms.}%
  \centering
  \pgfplotstabletypeset[
   col sep=comma,
   columns={name, runtime_kplex0,running time_vlm20,running time_bkd0,runtime_kplex3,running time_vlm23,running time_bkd3,runtime_kplex5,running time_vlm25,running time_bkd5},
   columns/name/.style={column type=l,string type,column name={Instance}},
   columns/runtime_kplex0/.style={column name={kPlex},column type=r,precision=0, relative*={0}},
   columns/running time_bkd0/.style={column name={\bkd},column type=r,precision=0, relative*={0}},
   columns/running time_vlm20/.style={column name={VML},column type=r,precision=0, relative*={0}},
   columns/runtime_kplex3/.style={column name={kPlex},column type=r,precision=0, relative*={0}},
   columns/running time_bkd3/.style={column name={\bkd},column type=r,precision=0, relative*={0}},
   columns/running time_vlm23/.style={column name={VML},column type=r,precision=0, relative*={0}},
   columns/runtime_kplex5/.style={column name={kPlex},column type=r,precision=0, relative*={0}},
   columns/running time_bkd5/.style={column name={\bkd},column type=r,precision=0, relative*={0}},
   columns/running time_vlm25/.style={column name={VML},column type=r,precision=0, relative*={0}},   
   every head row/.style={ before row=\toprule & \multicolumn{3}{c}{$\Delta = 0$} & \multicolumn{3}{c}{$\Delta \sim 5^3$} & \multicolumn{3}{c}{$\Delta \sim 5^5$}\\ \cmidrule(r){2-4} \cmidrule(r){5-7} \cmidrule(r){8-10},after row=\midrule},
   every last row/.style={after row=\bottomrule},
   after row={},
   ]{resultsClique.csv}
   \label{data_runningTime_Clique}
 \end{table*}
As one can see, \BKB{} outperforms the state-of-the-art algorithm~\viard{} on all but two graphs (by up to two orders of magnitude).
Notably, \viard{} performs significantly better on the ``infectious'' network for small $\Delta$-values. 
We could not find out which properties of the ``infectious'' network cause this behavior.

\subsection{Results on the Number and Characteristics of $\Delta$-$k$-Plexes}
\toappendix{
\begin{table*}[t]%
\footnotesize
  \caption{$\Delta$-$k$-plex statistics and running times for~$\Delta = 0$ in the classic setting: $\# R$~denotes the number of maximal $\Delta$-$k$-plexes, $|C|_{\max}$ 	denotes the maximum size of a $\Delta$-$k$-plex with respect to the number of vertices, $t$ and~$t_{\text{pivo}}$ denote the running times (in seconds) of \BKB~  without and with pivoting, respectively. The variables $c$ and $c_{\text{pivo}}$ denote the number of recursive calls of \BKB{} without and with pivoting, respectively. Empty cells represent that the time limit of one hour was exceeded. The instances were computed for $k=1,2,3$. 
  	Non-listed data sets signal that the instance could not be solved in the classic setting within the time limit of one hour.  }
  	\centering
  \pgfplotstabletypeset[
   col sep=comma,
   columns={name,num_kplexes_classic,max_kplex_size_classic, runtime_classic,num_calls_classic,runtime_classic_pivo,num_calls_classic_pivo},
   columns/name/.style={column type=l,string type,column name={Instance}},
   columns/num_kplexes_classic/.style={column name={$\# R$},column type=r,precision=0},
   columns/max_kplex_size_classic/.style={column name={$|C|_{\max}$},column type=r,precision=1},
   columns/runtime_classic/.style={column name={$t$},column type=r,precision=0, relative*={0}},
   columns/num_calls_classic/.style={column name={$c$},column type=r,int detect},
   columns/runtime_classic_pivo/.style={column name={$t_{\text{pivo}}$},column type=r,precision=1, relative*={0}},
   columns/num_calls_classic_pivo/.style={column name={$c_{\text{pivo}}$},column type=r,int detect},  
   every head row/.style={ before row=\toprule & & \multicolumn{5}{c}{Classical setting for $\Delta = 0 $ } \\ \cmidrule(r){2-7} ,after row=\midrule},
   every last row/.style={after row=\bottomrule},
   after row={},
   ]{results_d0_arxiv.csv}
   \label{data_delta01}
 \end{table*}%
\begin{table*}[t]%
\footnotesize
  \caption{$\Delta$-$k$-plex statistics and running times for~$\Delta \sim 5^3$ in the classic setting: $\# R$~denotes the number of maximal $\Delta$-$k$-plexes, $|C|_{\max}$ 	denotes the maximum size of a $\Delta$-$k$-plex with respect to the number of vertices, $t$ and~$t_{\text{pivo}}$ denote the running times (in seconds) of \BKB~  without and with pivoting, respectively. The variables $c$ and $c_{\text{pivo}}$ denote the number of recursive calls of \BKB{} without and with pivoting, respectively. Empty cells represent that the time limit of one hour was exceeded. The instances were computed for $k=1,2,3$. 
  	Non-listed data sets signal that the instance could not be solved in the classic setting within the time limit of one hour.}
  	\centering
  \pgfplotstabletypeset[
   col sep=comma,
   columns={name,num_kplexes_classic,max_kplex_size_classic, runtime_classic,num_calls_classic,runtime_classic_pivo,num_calls_classic_pivo},
   columns/name/.style={column type=l,string type,column name={Instance}},
   columns/num_kplexes_classic/.style={column name={$\# R$},column type=r,precision=0},
   columns/max_kplex_size_classic/.style={column name={$|C|_{\max}$},column type=r,precision=1},
   columns/runtime_classic/.style={column name={$t$},column type=r,precision=0, relative*={0}},
   columns/num_calls_classic/.style={column name={$c$},column type=r,int detect},
   columns/runtime_classic_pivo/.style={column name={$t_{\text{pivo}}$},column type=r,precision=1, relative*={0}},
   columns/num_calls_classic_pivo/.style={column name={$c_{\text{pivo}}$},column type=r,int detect},  
   every head row/.style={ before row=\toprule & & \multicolumn{5}{c}{Classical setting for $\Delta \sim 5^3$ } \\ \cmidrule(r){2-7} ,after row=\midrule},
   every last row/.style={after row=\bottomrule},
   after row={},
   ]{results_d3_arxiv.csv}
   \label{data_delta31}
 \end{table*}
\begin{table*}[t]%
\footnotesize
  \caption{$\Delta$-$k$-plex statistics and running times for~$\Delta \sim 5^5$ in the classic setting: $\# R$~denotes the number of maximal $\Delta$-$k$-plexes, $|C|_{\max}$ 	denotes the maximum size of a $\Delta$-$k$-plex with respect to the number of vertices, $t$ and~$t_{\text{pivo}}$ denote the running times (in seconds) of \BKB~  without and with pivoting, respectively. The variables $c$ and $c_{\text{pivo}}$ denote the number of recursive calls of \BKB{} without and with pivoting, respectively. Empty cells represent that the time limit of one hour was exceeded. The instances were computed for $k=1,2,3$. 
  	Non-listed data sets signal that the instance could not be solved in the classic setting within the time limit of one hour. }
  	\centering
  \pgfplotstabletypeset[
   col sep=comma,
   columns={name,num_kplexes_classic, max_kplex_size_classic,runtime_classic,num_calls_classic,runtime_classic_pivo,num_calls_classic_pivo},
   columns/name/.style={column type=l,string type,column name={Instance}},
   columns/num_kplexes_classic/.style={column name={$\# R$},column type=r,precision=0},
   columns/max_kplex_size_classic/.style={column name={$|C|_{\max}$},column type=r,precision=1},
   columns/runtime_classic/.style={column name={$t$},column type=r,precision=0, relative*={0}},
   columns/num_calls_classic/.style={column name={$c$},column type=r,int detect},
   columns/runtime_classic_pivo/.style={column name={$t_{\text{pivo}}$},column type=r,precision=1, relative*={0}},
   columns/num_calls_classic_pivo/.style={column name={$c_{\text{pivo}}$},column type=r,int detect},  
   every head row/.style={ before row=\toprule & & \multicolumn{5}{c}{Classical setting for $\Delta \sim 5^5$ } \\ \cmidrule(r){2-7} ,after row=\midrule},
   every last row/.style={after row=\bottomrule},
   after row={},
   ]{results_d5_arxiv.csv}
   \label{data_delta51}
 \end{table*}
 }
\toappendix{ 
\begin{table*}[!hp]%
\footnotesize
  \caption{$\Delta$-$k$-plex statistics and running times for~$\Delta =0$ with the connectedness criterion: $ \# R$ denotes the number of maximal $\Delta$-$k$-plexes, $|C|_{\max}$ denotes the maximum size of a $\Delta$-$k$-plex with respect to the number of vertices, $t$ and~$t_{\text{pivo}}$ denote the running times (in seconds) of \BKB~  without and with pivoting, respectively. The variables $c$ and $c_{\text{pivo}}$ denote the number of recursive calls of \BKB{} without and with pivoting, respectively. Empty cells represent that the time limit of one hour was exceeded. The instances were computed for $k=1,2,3$. 
  	Non-listed data sets signal that the instance could  not be solved with the connectedness criterion within the time limit of one hour. }  
  	\centering
  \pgfplotstabletypeset[
   col sep=comma,
   columns={name,num_kplexes_conn,max_kplex_size_conn,runtime_conn,num_calls_conn,runtime_conn_pivo,num_calls_conn_pivo},
   columns/name/.style={column type=l,string type,column name={Instance}},
   columns/num_kplexes_conn/.style={column name={$ \# R$},column type=r,precision=1},
   columns/max_kplex_size_conn/.style={column name={$|C|_{\max}$},column type=r,precision=1},
   columns/runtime_conn/.style={column name={$t$},column type=r,precision=0, relative*={0}},
   columns/num_calls_conn/.style={column name={$c$},column type=r,precision=0},
   columns/runtime_conn_pivo/.style={column name={$t_{\text{pivo}}$},precision=0, relative*={0}},
   columns/num_calls_conn_pivo/.style={column name={$c_{\text{pivo}}$},column type=r,precision=0},   
   every head row/.style={ before row=\toprule & \multicolumn{6}{c}{Connectedness Criterion for $\Delta = 0 $ } \\ \cmidrule(r){2-7} ,after row=\midrule},
   every last row/.style={after row=\bottomrule},
   after row={},
   ]{results_d0.csv}
   \label{data_delta02}
 \end{table*}%
\begin{table*}[!hp]%
\footnotesize
  \caption{$\Delta$-$k$-plex statistics and running times for~$\Delta \sim 5^3$ with the connectedness criterion: $\# R$ denotes the number of maximal $\Delta$-$k$-plexes, $|C|_{\max}$ 	denotes the maximum size of a $\Delta$-$k$-plex with respect to the number of vertices, $t$ and~$t_{\text{pivo}}$ denote the running times (in seconds) of \BKB~  without and with pivoting, respectively. The variables $c$ and $c_{\text{pivo}}$ denote the number of recursive calls of \BKB{} without and with pivoting, respectively. Empty cells represent that the time limit of one hour was exceeded. The instances were computed for $k=1,2,3$. 
  	Non-listed data sets signal that the instance could not be solved with the connectedness criterion within the time limit of one hour. }  
  	\centering
  \pgfplotstabletypeset[
   col sep=comma,
   columns={name,num_kplexes_conn,max_kplex_size_conn,runtime_conn,num_calls_conn,runtime_conn_pivo,num_calls_conn_pivo},
   columns/name/.style={column type=l,string type,column name={Instance}},
   columns/num_kplexes_conn/.style={column name={$\# R$},column type=r,precision=1},
   columns/max_kplex_size_conn/.style={column name={$|C|_{\max}$},column type=r,precision=1},
   columns/runtime_conn/.style={column name={$t$},column type=r,precision=0, relative*={0}},
   columns/num_calls_conn/.style={column name={$c$},column type=r,precision=0},
   columns/runtime_conn_pivo/.style={column name={$t_{\text{pivo}}$},precision=0, relative*={0}},
   columns/num_calls_conn_pivo/.style={column name={$c_{\text{pivo}}$},column type=r,precision=0},   
   every head row/.style={ before row=\toprule & \multicolumn{6}{c}{Connectedness Criterion for $\Delta \sim 5^3$ } \\ \cmidrule(r){2-7} ,after row=\midrule},
   every last row/.style={after row=\bottomrule},
   after row={},
   ]{results_d3.csv}
   \label{data_delta32}
 \end{table*}
\begin{table*}[!hp]%
\footnotesize
  \caption{$\Delta$-$k$-plex statistics and running times for~$\Delta \sim 5^5$ with the connectedness criterion: $\# R$ denotes the number of maximal $\Delta$-$k$-plexes, $|C|_{\max}$ denotes the maximum size of a $\Delta$-$k$-plex with respect to the number of vertices, $t$ and~$t_{\text{pivo}}$ denote the running times (in seconds) of \BKB~  without and with pivoting, respectively. The variables $c$ and $c_{\text{pivo}}$ denote the number of recursive calls of \BKB{} without and with pivoting, respectively. Empty cells represent that the time limit of one hour was exceeded. The instances were computed for $k=1,2,3$. 
  	Non-listed data sets signal that the instance could not be solved with the connectedness criterion within the time limit of one hour. }  
  	\centering
  \pgfplotstabletypeset[
   col sep=comma,
   columns={name,num_kplexes_conn,max_kplex_size_conn,runtime_conn,num_calls_conn,runtime_conn_pivo,num_calls_conn_pivo},
   columns/name/.style={column type=l,string type,column name={Instance}},
   columns/num_kplexes_conn/.style={column name={$\# R$},column type=r,precision=1},
   columns/max_kplex_size_conn/.style={column name={$|C|_{\max}$},column type=r,precision=1},
   columns/runtime_conn/.style={column name={$t$},column type=r,precision=0, relative*={0}},
   columns/num_calls_conn/.style={column name={$c$},column type=r,precision=0},
   columns/runtime_conn_pivo/.style={column name={$t_{\text{pivo}}$},precision=0, relative*={0}},
   columns/num_calls_conn_pivo/.style={column name={$c_{\text{pivo}}$},column type=r,precision=0},   
   every head row/.style={ before row=\toprule & \multicolumn{6}{c}{Connectedness Criterion for $\Delta \sim 5^5$ } \\ \cmidrule(r){2-7} ,after row=\midrule},
   every last row/.style={after row=\bottomrule},
   after row={},
   ]{results_d5.csv}
   \label{data_delta52}
 \end{table*}
 }
We present the results of \BKB{} and compare them to the findings of Himmel et al.~\cite{himmel2017adapting}.
We start with some expected findings and then continue with explaining some more interesting observations.

In \cref{data_delta01,data_delta31,data_delta51,data_delta02,data_delta32,data_delta52} (Appendix), one can find the number of $\Delta$-$k$-plexes, the running times, and the number of recursive calls of \BKB{} with and without pivoting and with and without the connectedness criterion on all our considered data sets for different~$\Delta$-values.
We found that increasing the~$\Delta$-value increases both the maximum size and the maximum lifetime of~$\Delta$-$k$-plexes for a fixed~$k$.
This is to be expected as each~$\Delta$-$k$-plex is also a~$(\Delta+1)$-$k$-plex.
Similarly to Himmel et al.~\cite{himmel2017adapting}, we found that for fixed~$k$ increasing the~$\Delta$-value first decreases and then increases the number of maximal~$\Delta$-$k$-plexes.

Since each~$k$-plex is also a~$(k+1)$-plex, it seems plausible that the number of~$k$-plexes increases with larger values of~$k$.
Note, however, that it might happen that two maximal~$k$-plexes merge into one maximal~${(k+1)}$-plex and so this number can actually decrease.
Ultimately, the number of new~$(k+1)$-plexes that are not~$k$-plexes outweighed the number of these merges in our experiments.
More interestingly, the number of connected~$\Delta$-$k$-plexes for a fixed~$k$ and large~$\Delta$ is significantly closer to the total number of~$\Delta$-$k$-plexes.
In contrast, the number of connected~$\Delta$-$k$-plexes for small~$\Delta$ is significantly smaller than the total number of~$\Delta$-$k$-plexes (by orders of magnitude).
We conjecture the reason to be that the higher the value of~$\Delta$ is, the more likely it becomes that there is an edge between two sets of vertices within a~$\Delta$-frame.


\section{Conclusion}
\label{sec:conclusion}
We introduced the \BKB~algorithm for enumerating all maximal~\dkps~in a temporal graph and studied its running time.
In experiments on real-world networks, we showed that our algorithm is faster when enumerating maximal $\Delta$-cliques, which are the same as~$\Delta$-$1$-plexes, than the algorithm by Himmel et al. \cite{himmel2017adapting} by an average factor between~$4$ and~$50$ depending on the~$\Delta$-value (typically greater speedups for smaller values of $\Delta$). 
We further showed that our algorithm performs better than the state-of-the-art algorithm by~\citet{ViardML18} on most instances but is also heavily out-performed on one of our instances. 
Explaining this behavior in terms of identifying properties of the instance that cause it is an interesting question for future research.

Our experiments also suggest that the number of trivial solutions for increasing $k$ greatly limits the scalability of any algorithm enumerating all maximal \dkps. 
Thus, we instead proposed to enumerate all maximal \emph{connected} $k$-plexes of minimum order~${2k+1}$. 
For this setting, we developed a heuristic for our algorithm which greatly improved its performance. 
We believe that there is still room for further heuristics to improve the scalability for larger $k$-values.

From a modelling perspective, it would be interesting to extend the definition of $\Delta$-cliques and \dkps\ in a way that allows to quantify how much each one should be \emph{isolated} from the remaining graph. This is a common requirement in community detection and has been formalized in the static case for cliques and $k$-plexes~\cite{HuffnerKMN09,MaxIsolatedCliques,Ito2009}. A natural future research direction is to lift the concept of \emph{isolation} to the temporal setting.

\section*{Acknowledgments} A-SH was supported by DFG, projects DAMM (NI 369/13) and FPTinP (NI 369/16), and HM was supported by DFG, project MATE (NI 369/17).

\bibliographystyle{abbrvnat}
\bibliography{references}

\clearpage
\section{Appendix: Further Tables and Diagrams}
\label{sec:appendix}
\appendixProofText

\end{document}